\def\paperversion{arxiv}
\newif\ifshowappendix
\newif\ifanonymous
\newif\iftitleparen
\newif\iftitleprefix
\newif\ifvenueinfo
\newif\ifrelaxedformat
\def\pvconf{conf}\def\pvsupp{supp}\def\pvarxiv{arxiv}
\def\titleprefixtext{\texorpdfstring{\ourname}{TD-Orch}: }
  \def\@titlefont{\huge\bfseries}
\newcolumntype{Y}{>{\raggedleft\arraybackslash}X}
\newtheorem{theorem}{Theorem} 
\newtheorem{lemma}{Lemma}
\bfseries\color{blue!70!black}, 
\itshape\color{gray},
\renewcommand\footnotetextcopyrightpermission[1]{}
\newcommand{\ourname}{TD-Orch\xspace}
\newcommand{\ourgraph}{TDO-GP\xspace}
\newcommand{\kv}{key-value\xspace}
\newcommand{\KV}{Key-Value\xspace}
\newcommand{\taskdata}{task-data\xspace}
\newcommand{\Taskdata}{Task-data\xspace}
\newcommand{\TaskData}{Task-Data\xspace}
\newcommand{\naive}{na\"ive\xspace}
\newcommand{\Topt}{\mbox{$T_{opt}$}}
\newcommand{\hongbo}[1]{{\color{brown} {\bf Hongbo:} #1}}
\newcommand{\charlie}[1]{{\color{red} {\bf Charlie:} #1}}
\newcommand{\laxman}[1]{{\color{magenta} {\bf Laxman:} #1}}
\newcommand{\guy}[1]{{\color{purple} {\bf Guy:} #1}}
\newcommand{\guyup}[1]{{\color{purple} #1}}
\newcommand{\phil}[1]{{\color{blue} {\bf Phil:} #1}}
\newcommand{\yan}[1]{{\color{olive} {\bf Yan:} #1}}
\newcommand{\yiwei}[1]{{\color{green} {\bf Yiwei:} #1}}
\newcommand{\qiushi}[1]{{\color{orange} {\bf Qiushi:} #1}}
\newcommand{\hongbo}[1]{}
\newcommand{\charlie}[1]{}
\newcommand{\laxman}[1]{}
\newcommand{\guy}[1]{}
\newcommand{\guyup}[1]{}
\newcommand{\phil}[1]{}
\newcommand{\yan}[1]{}
\newcommand{\yiwei}[1]{}
\newcommand{\qiushi}[1]{}
\definecolor{forestgreen}{rgb}{0.13, 0.55, 0.13}
\renewcommand{\emph}{\textit}
\newcommand{\confversiononly}[1]{}
  \newcommand{\nonarxivonly}[1]{}
  \newcommand{\nonarxivonly}[1]{#1}
  \def\country#1{\global\@ACM@countrypresenttrue}
\newcommand{\SetSuppRef}[2]{\@namedef{sr@val@#1}{#2}}
  \newwrite\sr@out
\write\sr@out{\@percentchar\space Auto-generated by the supp/arxiv build of main.tex; do not edit by hand.}%
  \def\sr@dispatch#1:#2\@nil#3{%
    \@ifundefined{sr@pre@#1}{\Cref{#3}}{\csname sr@pre@#1\endcsname\ref{#3}}}
  \newcommand{\suppref}[1]{%
    \immediate\write\sr@out{\noexpand\SetSuppRef{#1}{\getrefnumber{#1}}}%
    \sr@dispatch#1:\@nil{#1}}
  \def\sr@dispatch#1:#2\@nil#3{%
    \@ifundefined{sr@val@#3}{the supplementary material}%
      {\@ifundefined{sr@pre@#1}{}{\csname sr@pre@#1\endcsname}%
       \csname sr@val@#3\endcsname\ (supplementary material)}}
  \newcommand{\suppref}[1]{\sr@dispatch#1:\@nil{#1}}
\newcommand{\whp}{\textit{whp}\xspace}
\newcommand{\RNum}[1]{\expandafter{\romannumeral #1\relax}}
\newcommand\romenum[1]{\mbox{(\textit{\RNum{#1}})}\nolinebreak{}}
\newcommand{\myparagraph}[1]{\smallskip\noindent {\bf #1.}}
\newcommand{\hide}[1]{}
\newtheorem{thm}{Theorem}
\newtheorem{dfn}[thm]{Definition}
\newcommand{\VERTEXSUBSET}{\textsc{VertexSubset}\xspace}
\newcommand{\DISTVERTEXSUBSET}{\textsc{DistVertexSubset}\xspace}
\newcommand{\EDGEMAP}{\textsc{EdgeMap}\xspace}
\newcommand{\DISTEDGEMAP}{\textsc{DistEdgeMap}\xspace}
\begin{document}

\title{\iftitleprefix\titleprefixtext\fi
Efficient Task-Data Orchestration for Distributed Systems with Application to Graph Processing%
}

\author{Yiwei Zhao}
\authornote{Equal contribution; order of appearance determined randomly.}
\affiliation{%
  \institution{Carnegie Mellon University}
  \country{USA}}
\author{Qiushi Lin}
\authornotemark[1]
\affiliation{%
  \institution{University of Texas at Austin}
  \country{USA}}
\author{Hongbo Kang}
\affiliation{%
  \institution{Tsinghua University}
  \country{China}}
\author{Guy E. Blelloch}
\affiliation{%
  \institution{Carnegie Mellon University}
  \country{USA}}
\author{Laxman Dhulipala}
\affiliation{%
  \institution{University of Maryland}
  \country{USA}}
\author{Yan Gu}
\affiliation{%
  \institution{University of California, Riverside}
  \country{USA}}
\author{Charles McGuffey}
\affiliation{%
  \institution{Reed College}
  \country{USA}}
\author{Phillip B. Gibbons}
\authornote{Corresponding author: \href{mailto:gibbons@cs.cmu.edu}{gibbons@cs.cmu.edu}}
\affiliation{%
  \institution{Carnegie Mellon University}
  \country{USA}}

\begin{abstract}

We introduce a task-data orchestration abstraction that supports a range of distributed applications.
Given a batch of \textit{lambda} tasks each requesting a data item, where both tasks and data are distributed across multiple machines, each task must be co-located with its target data (by moving tasks and/or data) and then executed.
We present TD-Orch, an efficient lambda-task-centric orchestration framework for low-overhead load balancing with a simple interface for application developers.
TD-Orch employs a distributed push-pull technique, leveraging the bidirectional flow of both tasks and data to achieve load balance across machines even under highly skewed access patterns, with low communication overhead.
Experimental results on key-value stores show that TD-Orch achieves up to $2.8\times$ speedup over existing distributed scheduling baselines.
Building on TD-Orch, we present TDO-GP, a distributed graph processing system with $4.1\times$ average speedup over state-of-the-art open-source distributed graph systems for general graph processing.

\end{abstract}

\maketitle
\pagestyle{plain}

\section{Introduction}
\label{sec:introduction}
Distributed computer systems, including multi-socket NUMAs~\cite{lameter2013numa}, supercomputers~\cite{schneider2022exascale}, disaggregated memory~\cite{intelagilex,calciu2021rethinking,zeng2025performance,alias2010optimizing,guo2026cxlmc}, processing-in-memory~\cite{upmem,mutlu2023primer}, computational storage~\cite{samsungsmartssd,gu2016biscuit} and datacenters~\cite{wu2015cloud,kumar2021resource} provide tremendous compute power and memory capacity.
Motivated by these platforms, there has been a huge body of work on building high-performance distributed systems for applications such as graph processing~\cite{morales2021graph,zhang2017simple,welc2013greenmarl,jin2023graphsycl,peng2018graphphi,busato2018hornet,veras2016scale}, machine learning~\cite{moritz2018ray,narayanan2019pipedream,thorpe2021dorylus,abadi2016tensorflow,chen2015mxnet,dagli2022axonn,baskaran2019fast}, \kv stores~\cite{decandia2007dynamo,escriva2012hyperdex,chang2006bigtable,cooper2008pnuts,ye2021hardware,basin2017kiwi,daly2018numask}, databases~\cite{corbett2013spanner,shute2013f1,dageville2016snowflake,kim2025no}, among many other application areas.

Writing correct and fast distributed applications is, however, notoriously challenging.
Though writing \textit{low-level code} with interfaces such as Message Passing Interface (MPI)~\cite{gropp1999using} can deliver high performance, it places the burden of performance engineering on programmers, substantially complicating programming and system development---as shown by manual efforts in scheduling~\cite{isard2009quincy,ousterhout2013sparrow,popovici2018fft}, MPI communication~\cite{vadhiyar2000automatically,almasi2005optimization,faraj2005automatic,faraj2006star,peng2023sharedmem,feng2022dragonfly,noeth2009scalatrace,chatterjee2013integrating,vo2010scalable,lange2018static}, data placement~\cite{wang2023provably,curino2010schism,kumar2014sword,ren2021sentinel}, caching~\cite{jin2017netcache,jiang2002lirs,decandia2007dynamo,chandra2007paxos,suh2004dynamic,huang2016c3d}, etc.

Many system designs written in low-level interfaces repeatedly reinvent techniques to address common challenges.
Among them, a commonly-seen challenge is achieving low-overhead \textit{load balance} in both computation and inter-machine communication.
Many \textit{general-purpose} high-level frameworks \cite{dean2008mapreduce,zaharia2016apache,moritz2018ray,rocklin2015dask,isard2007dryad,murray2011ciel,murray2013naiad,carbone2015flink} do not address this directly: users must still manage data placement and communication to avoid hot spots.
Consequently, prior \textit{domain-specific} systems often adopt ad hoc solutions, e.g., mirror nodes in graphs~\cite{zhu2016gemini,gonzalez2012powergraph}, virtual servers in \kv stores~\cite{stoica2003chord,decandia2007dynamo}, or heuristic partitioning in databases~\cite{curino2010schism,quamar2013sword}.
Despite varied forms, \textit{these techniques share similar high-level ideas}, suggesting the possibility of one unified abstraction/framework.

We achieve low-overhead load balancing with an efficient scheduling scheme and a unified interface for one commonly-seen family of applications.
We build upon prior work on *-centric (e.g., node-centric, edge-centric) abstractions by introducing a \textit{\textbf{\TaskData Orchestration}} abstraction based on microservices-inspired \textit{lambda-tasks}.
Our lambda-task-centric abstraction (formally defined in \S\ref{subsec:abstraction_interface}) can efficiently support a wide range of distributed applications, including graph processing, \kv stores, databases, and others (as discussed in \S\ref{subsec:abstraction_application}).
Users or application developers only need to specify basic operation(s) each lambda-task performs. 
An underlying system automates efficient orchestration of computation and communication, allowing users to write high-performance code with less system-management effort.

\myparagraph{System Design}
In this paper, we propose \textbf{\textit{\ourname}}, a system that efficiently realizes our task-data orchestration abstraction and targets low-overhead load balance (\S\ref{sec:main_design}).
Its key technique is to orchestrate the system through \textbf{\textit{distributed push-pull}}, which exploits both \textit{pushing} tasks to data and \textit{pulling} data to tasks, unlike most prior approaches using only one direction~\cite{vora2014aspire,li2020pegasus,gog2016firmament,fan2011small,murray2013naiad,jelasity2005gossip,vadhiyar2000automatically,menon2013distributed,leitao2007epidemic}.
As push or pull alone cannot avoid hotspots, \ourname detects hot data and introduces \romenum{1} \textit{transit} machines inside a \textit{communication forest} routing scheme and \romenum{2} mergeable \textit{meta-task} structures.
Together, these ensure that low-contention data are handled by directly pushing task contexts to the data, while high-contention data are delivered to tasks communi\-cation-efficiently, achieving load balance with low overhead.

Although our design may resemble prior HPC collectives such as reduction trees~\cite{vadhiyar2000automatically} and gossip-style aggregation~\cite{jelasity2005gossip}, it is fundamentally different.
\romenum{1} Unlike prior empirical tree-based designs, we introduce a \textbf{\textit{bidirectional push-pull flow}} over the forest with provable optimality guarantees that prior unidirectional tree-structured designs lack.
\romenum{2} Compared with prior theory-guided approaches (e.g.,~\cite{axtmann2015practical}), our design requires only two passes over the tree/forest rather than at least three.
\romenum{3} Moreover, we cast scheduling as a unified \taskdata orchestration abstraction supporting a variety of applications and distributed systems platforms.
We illustrate the benefits of \ourname on two case study applications.

\myparagraph{Case Study I: Distributed \KV Stores}
\ourname achieves strong performance in distributed \kv stores.
On a 1,024-core cluster, our design achieves geomean speed\-ups of $2.09\times$, $1.42\times$, and $2.83\times$ over three baselines (\S\ref{sec:key_value_case_study}).

\myparagraph{Case Study II: Distributed Graph Processing}
In our primary case study, we develop \textbf{\ourgraph}, a distributed graph processing system atop \ourname.
\ourgraph leverages \ourname as the primary mechanism for orchestrating \taskdata flow and employs additional implementation techniques to fully exploit the \taskdata layouts provided by \ourname.
Like \ourname, \ourgraph is theoretically grounded, offering work-efficient bounds for the five graph algorithms implemented in this study.
In addition to directly leveraging \ourname, we further optimize global communication, local computation, and the coordination between system components, tailoring these improvements for graph workloads to fully exploit the execution layout provided by \ourname.
\ourgraph outperforms two major families of prior graph systems: compared with linear-algebra-based systems (e.g.,~\cite{mofrad2020graphite,ahmad2018la3,anderson2016graphpad}), it provides significantly better provable and empirical \textit{load balance} by using transit machines; and compared with graph-algorithm-based systems (e.g.,~\cite{zhu2016gemini,chen2019powerlyra,yan2015effective}), it achieves better graph-theoretic guarantees and higher performance using a superior \textit{graph-tailored implementation}.
\ourgraph achieves geomean speedups of $4.1\times$ over the best prior open-source distributed graph systems targeting general graph problems~\cite{zhu2016gemini,mofrad2020graphite,ahmad2018la3,yan2015effective,behera2024starplat} and up to $15\times$ on specific real-world datasets (\S\ref{sec:graph_processing_system}--\ref{sec:eval}).
Moreover, the interface enables user development of complex graph algorithms in 70 lines of C++ code, versus 400+ lines in prior work (e.g.,~\cite{zhu2016gemini}).

\myparagraph{Main Contributions}
In summary, in this paper:
\begin{itemize}[topsep=0pt,itemsep=0pt,parsep=0pt,leftmargin=15pt]
    \item We define \textit{\taskdata orchestration} as a fundamental execution pattern in distributed applications, with a concrete lambda-centric interface (Fig.\ref{fig:orchestration_interface}) that reduces development effort for users to write high-performance code.
    \item We design \textit{\ourname}, a theoretically grounded and practically efficient orchestration system that supports the interface and achieves load balance with low overhead, and show its effectiveness for \kv stores.
    \item We develop \textit{\ourgraph}, a distributed graph processing system built on \ourname and enhanced with tailored implementation techniques, which significantly outperforms prior state-of-the-art graph systems~\cite{zhu2016gemini,mofrad2020graphite,ahmad2018la3,yan2015effective,behera2024starplat}.
\end{itemize}

\section{Preliminaries}
\label{sec:setup}

\subsection{Problem Definition and System Modeling}
\label{subsec:abstraction_system_model}

\myparagraph{Tasks and Data}
We consider a batch of $n$ tasks, each with a context size of at most $\sigma$ words, to be executed during an application.
At initialization, the tasks are evenly distributed across the machines.
Data are partitioned into \textit{data chunks} of granularity $B$ words, analogous to cache-line/page storage.
Data are moved between machines in units of data chunks.
Each chunk starts on a random machine, providing adversary-resistant load balance in both storage and access patterns~\cite{sanders1996competitive}.
Such randomized placement is commonly used for robustly high performance~\cite{wang2023provably,kang2022pimtreevldb,shan2023explore,luo2023smart}.
During each orchestration stage, each task requires at most one data chunk (possibly zero) stored outside its task context.
A task completes only after this data chunk is co-located on the same machine, processed and, if necessary, written back.

\myparagraph{Design Goals}
Both tasks and data chunks are allowed to be moved between machines, and even routed through intermediate machines.
Our goal is to \textit{orchestrate} tasks and data---i.e., to design an online scheduling scheme that \romenum{1} minimizes the amounts of communication and computation, and \romenum{2} ensures load balance across machines to maximize throughput, even for skewed workloads with hot data chunks.

\myparagraph{System Modeling}
For our analysis results, we use the popular Bulk-Synchronous Parallel (BSP) model~\cite{valiant1990bridging}.
The model assumes a system with $P$ machines, each with a processing unit and a local memory without shared global memory.
All machines are connected by a point-to-point network, similar to the MPI abstraction.
The system operates in \textit{rounds} (or \textit{supersteps}),
where periodic barrier synchronizations are required.
BSP accounts for \textbf{\textit{computation time}} (maximum work) and \textbf{\textit{communication time}} (maximum bytes sent/received) across machines in each round (see \suppref{sec:model} for details).
The model captures the importance of load balance in its use of ``maximum'' in these metrics.

\subsection{Prior Scheduling Strategies}
\label{subsec:preliminary_prior_work}

\Taskdata orchestration seeks a scheduling scheme in which $n$ task contexts, each of size $O(\sigma)$, access up to $n' \leq n$ distinct data chunks, each of size $B$.
The ideal communication time of $\Topt=O(n'\cdot\min\{B, \sigma\} / P)$ is unachievable in practice using only online scheduling algorithms.
Instead, prior work typically adopts one of the following strategies.

\myparagraph{Direct Pull}
A \naive strategy first eliminates duplicate data chunk requests by any single machine, then fetches all required data to corresponding tasks.
This approach, common in remote direct memory accesses (RDMA), works well when \textit{contention} (reference count)
on each chunk is low.
However, contention can be high---for example, in graph processing, contention can arise from extremely high-degree vertices---leading to high loads on machines storing ``hot'' chunks and severe load imbalance.
In the worst case, all $P$ machines request the same data chunk ($n'=1$), yielding communication time $O(PB)$ and competitive ratio $O(PB/\Topt) =O\left(P^2B/\min\{B,\sigma\}\right)$, which is \textbf{\textit{prohibitive}} as $P$ grows.

\myparagraph{Direct Push}
Another common strategy in distributed settings, e.g., in remote procedure calls, is to offload tasks to the machines that store the required data chunks.
However, machines holding ``hot'' data chunks incur high communication cost (for receiving tasks) and high computation work (for executing tasks), resulting in severe load imbalance.
In the worst case, all $n$ tasks are pushed to the same chunk,
resulting in $O(n\sigma)$ communication time and a competitive ratio of $O\left(P n\sigma/\min\{B, \sigma\}\right)$, which is also \textbf{\textit{prohibitive}}.

\myparagraph{Theory-Guided Designs}
Prior work on the Massively Parallel Computing (MPC) model~\cite{im2023massively} can extend to theoretically-efficient solutions to the orchestration problem (e.g.,~\cite{im2023massively,goodrich2011sorting,chowdhury2013oblivious}).
The procedure begins by sorting tasks according to the addresses of their required data chunks.
Then, MPC broadcasting~\cite{ghaffari2019massively} is used to distribute each data chunk to the corresponding tasks.
After task execution, a reverse broadcasting step writes the results back to the data chunks, followed by a reverse sorting step that restores tasks to their original order.
It has been shown that the computation work is load balanced, and the communication time is $\Theta(\frac{n}{p}\log_{\frac{n}{P}}{P})$~\cite{fish2015computational}, which is optimal~\cite{roughgarden2018suffles}.
Although MPC schemes are asymptotically optimal, we show in \S\ref{sec:key_value_case_study} that their constant factors are non-negligible in practice, resulting in \textbf{\textit{poor performance}} on real-world systems (e.g.,~\cite{axtmann2015practical}) compared to empirical designs.

\myparagraph{Empirical Designs}
Many empirical designs (e.g.,~\cite{vora2014aspire,li2020pegasus,gog2016firmament,fan2011small,murray2013naiad,jelasity2005gossip,vadhiyar2000automatically,menon2013distributed,willcock2010ampp,ghosh2022neighborhood,chen2016baymax,kim2024tccl,scogland2012heterogeneous,pande1995scalable,kong2016pipes,nemirovsky2017machine,xu2010mitigating,popov2019efficient,moritz2018ray,rocklin2015dask,yan2019thinker}, \suppref{sec:appendix_related_frameworks}), while not directly targeting \taskdata orchestration, introduce useful techniques, such as communication aggregation.
But few provide rigorous analysis of competitive ratios, and none offer tight bounds, limiting their (i) generalizability to broader distributed systems and (ii) robustness to skewed workloads.

\section{Lambda-Task-Centric Abstraction}
\label{sec:interface}

\subsection{Abstraction and Interface}
\label{subsec:abstraction_interface}

Our \taskdata orchestration abstracts applications that follow an execution pattern of repeated BSP-style~\cite{valiant1990bridging, valiant1990general,valiant1988optimally} \textit{orchestration stages}.
In each stage, a batch of \textit{lambda-tasks} (simply \textit{tasks} hereafter) runs in parallel.
Each task resembles a C++ closure (Fig.\ref{fig:orchestration_interface}): it contains local metadata (\textit{task context}, or \textsc{LocalContexts}), an \textsc{InputPointer} specifying the external data to read, a local lambda function $f$ describing the task's computation, and an \textsc{OutputPointer} specifying where the result will be written.
Within a stage, each task \romenum{1} \textit{reads} the input data chunk from \textsc{InputPointer}, \romenum{2} performs a local \textit{computation} by applying $f$ to its local contexts and the chunk, and \romenum{3} \textit{writes} the result to \textsc{OutputPointer}, using the \textsc{wb} function (which specifies how multiple updates to the same data chunk are ``merged'').

\begin{figure}
\begin{lstlisting}
function Orchestration(
    tasks: An array of Task structures
    wb: function, // wb(pointer, value)
);
struct Task {
    pointer InputPointer, OutputPointer;
    LocalContexts[]; // Per-Task local metadata
    Lambda f; // f(InValue) -> OutValue
};
\end{lstlisting}
\vspace{-1em}
\caption{The \textit{\TaskData Orchestration} interface. Task and data are distributed among the same set of machines.}
\label{fig:orchestration_interface}
\vspace{-1em}
\end{figure}

\subsection{Supported Applications}
\label{subsec:abstraction_application}

Our abstraction focuses on the case where each lambda task requires at most one external data chunk, simplifying both the interface and the orchestration system supporting it.
While this might seem limiting, in fact many distributed applications can directly leverage the interface to reduce the manual burden of system management, while achieving high performance.
Here we show multiple examples.

\myparagraph{Graph Processing}
Many vertex- and edge-centric graph algorithms share a common execution pattern: a vertex array stores intermediate values; in each stage, a subset of edges are activated, read the values of their endpoints, perform a computation using these vertex values and the edge weight, and write the result back to the vertex array.
Examples include breadth-first search (BFS), shortest paths, betweenness centrality, connectivity, PageRank, and spanning forests (\S\ref{sec:graph_processing_system}).

All such graph applications can be expressed within our \taskdata orchestration framework.
An example pseudocode\footnote{This example shows how graph algorithms can be written using the orchestration interface. It is intentionally simple but not work-efficient. Work-efficient implementations rely on auxiliary structures introduced later in \S\ref{sec:graph_processing_system}, while retaining a similarly simple programming interface (Fig.\ref{fig:distedgemap}).}
for BFS is shown in Alg.\ref{alg:bfs}.
It proceeds in stages, maintaining a frontier of vertices $i$ steps away from the source in the $i$-th stage. 
Here, tasks correspond to edges, and data correspond to distance values of vertices.
At each orchestration stage, lambda $f$ is applied to each edge.
It checks whether the source vertex of the edge was activated in the previous round, having a distance of $\textsc{round}-1$.
If so, it writes distance \textsc{round} to the destination vertex, if it was not written already.

In Alg.~\ref{alg:bfs}, our interface lets users focus solely on application semantics, freeing them from performance optimization.

\begingroup
\newcommand{\TightAlgoSkip}{\vspace{-3pt}}
\SetAlgoSkip{TightAlgoSkip}
\begin{algorithm}[t]
\small
\fontsize{9pt}{12pt}\selectfont
\SetKwFor{IF}{If}{then}{endif}
\SetKwFor{While}{While}{do}{endwhile}
\SetKwFor{ParFor}{parallel for}{do}{endfch}
\SetKwFor{For}{for}{do}{endfch}
\SetKwFor{Function}{Function}{:}{}
\KwIn{$G=(V,E)$, $n=|V|$, $m=|E|$; \textsc{StartVertex}}
\KwOut{\textsc{Dist}[$1:n$]: Distributed vertex value array}
Initialize vertex value array: \textsc{Dist}[$1:n$] = [-1]$\times$n\\
\textsc{VertexUpdated} = 1; \textsc{round} = 0\\
\Struct{Task}{%
    LocalContext: Edge (u, v)\\
    InputPointer, OutputPointer: ptr(dist[u]), ptr(dist[v])\\
    \Function{f\textnormal{(value\_u: int)}}{
        \Return \textsc{round} \textbf{{If}} (value\_u == \textsc{round} - 1) \textbf{{Else}} -1
    }
}
\Function{\textsc{WriteBack}\textnormal{(dist\_ptr: pointer, value\_v: int)}}{
    \IF{\textnormal{*dist\_ptr == -1} and \textnormal{value\_v != -1}}{
        *dist\_ptr = value\_v;\quad\textsc{VertexUpdated} = 1
    }
}
\Function{BFS\textnormal{($G$,} \textsc{StartVertex}\textnormal{: int)}}{
    \textsc{Dist}[\textsc{StartVertex}] = $0$;\quad Initialize \textsc{tasks}[1:m] with $E$\\
    \While{\textsc{VertexUpdated} $>$ \textsc{0}}{
        \textsc{round} += $1$; \textsc{VertexUpdated} = 0 \\
        \textsc{Orchestration}(\textsc{tasks}, \textsc{WriteBack}) \ // tasks, wb  
    }
    \Return \textsc{Dist}[$1:n$]
}
\caption{Breadth-First Search (BFS)}\label{alg:bfs}
\end{algorithm}
\endgroup

\myparagraph{\KV Stores}
Many \kv stores naturally fit within the \taskdata orchestration model.
A straightforward example is a (distributed) hash table, where reading and updating a batch of items can be expressed as a one-stage orchestration.
Here, the tasks correspond to operations by defining $f$ as the per-item operation, and the data reside in the hash-table storage.
Each batch can be executed using a single orchestration stage that:
\romenum{1} reads values from the hash table using task-specified keys,  
\romenum{2} processes fetched values, and
\romenum{3} writes results back to the hash table.
\S\ref{sec:key_value_case_study} provides further details.

Similarly, many ordered indexes (e.g., B-trees, radix trees) can be represented within this abstraction by decomposing each search/update query into multiple orchestration stages.
In each stage, a single level of the index/tree is traversed: the corresponding internal node is retrieved, and $f$ performs the search within the node to identify the appropriate child.
This process continues until the leaf is reached.

\myparagraph{Transactional/Analytical Databases}
OLTP/OLAP share structural similarities with \kv stores and can process each query as a sequence of orchestration stages.
Tree-index operations, such as Masstree~\cite{mao2012cache} traversals and range scans, access one node/block per level, with each level handled in one stage.
Version chains resemble bucket-based hash tables, while record storage can use direct hash tables, both naturally supported by \ourname.
Infrequent multi-input/output operations can be decomposed into multiple stages.

\myparagraph{Vector Search}
Approximate nearest neighbor (ANN) is a key component in vector search.
Many state-of-the-art designs use bucket-based methods~\cite{tian2023db}, graph-based approaches~\cite{jayaram2019diskann,malkov2018efficient}, or hybrid schemes~\cite{munoz2019hierarchical,manohar2024parlayann} that maintain a nearest-neighbor graph at higher levels to guide searches toward candidate buckets for detailed vector comparisons.
Bucket traversal naturally fits the orchestration framework, while graph traversal resembles a vertex-centric graph algorithm, which can also be expressed using orchestration.

\myparagraph{Fault-Tolerant Storage}
Our orchestration framework can also support RAID~\cite{chen1994raid} or erasure coding~\cite{balaji2018erasure} in fault-tolerant storage, by setting \textsc{InputPointers} to fetch the original data, \textsc{wb} to write back to the target location at \textsc{OutputPointers}, and $f$ to perform the RAID or erasure-coding computation.

\section{\ourname: System Design}
\label{sec:main_design}

\textbf{\textit{\ourname}} is a practical orchestration framework that provides provable load-balanced scheduling for Fig.\ref{fig:orchestration_interface}'s API, with low communication overhead under contention.
It uses a (multi-)tree-structured routing scheme (\S\ref{subsubsec:comm_forest}) across machines that may resemble prior work~\cite{jelasity2005gossip,vadhiyar2000automatically}, but the task-data scheduling inside this routing is fundamentally different.

Its key idea is to judiciously \textit{exploit both directions} of flow: \textbf{\textit{push}}, where a task context is sent to the machine storing its required data chunk, and \textbf{\textit{pull}}, where the data chunk is fetched to the machine where the task resides.
Such bidirectional strategies (referred to as \textbf{\textit{distributed push-pull}}) are rarely emphasized in prior work, which typically adopts only one direction~\cite{vora2014aspire,li2020pegasus,gog2016firmament,fan2011small,murray2013naiad,jelasity2005gossip,vadhiyar2000automatically,rai2018load,tardieu2014glb,beltran2022ompss2cluster,montesano2024spatial}.
Only a few prior works show that this strategy can significantly improve load balance, but they are limited to PIM settings~\cite{kang2022pimtreevldb,kang2023pimtrie,zhao2025optimal,zhao2026pimzd,kang2025pim,kang2026nonclairvoyant}, relying on an extremely powerful centralized coordinator.
\ourname proposes a more widely applicable push-pull scheduling for distributed systems (\S\ref{subsubsec:dist_push_pull}) that does not require a powerful centralized coordinator.

\myparagraph{Execution Pipeline}
Aligned with the original three-step orchestration in \S\ref{subsec:abstraction_interface}, \ourname refines Step~(i) (\textit{Read Data}) by splitting it into two substeps---\textit{contention detection} and \textit{\taskdata co-location}.
This yields a four-phase execution pipeline:

\begin{enumerate}[label=(\arabic*),topsep=0pt,itemsep=0pt,parsep=0pt,leftmargin=15pt]
    \item \textbf{Contention Detection}: As a preprocessing phase during execution, we first determine which data chunks are hot and which machines are contending for the hot data chunks, as follows. 
    The information of each task is sent to the machine storing its target data to gather the contention information for each data chunk. To preserve load balance, messages are passed along a communication forest network (\S\ref{subsubsec:comm_forest}), in the form of meta-tasks (\S\ref{subsubsec:meta_task}).
    \item \textbf{\TaskData Co-location}: Using the information gathered in Phase~1, tasks and data chunks are redistributed across machines to achieve an optimized schedule that minimizes communication and computation time (\S\ref{subsubsec:dist_push_pull}).
    \item \textbf{Task Execution}: Now that each task and (a copy) of its target data are co-located on the same machine, the tasks are executed locally.
    \item \textbf{Data Write-backs}: The updated value for any data chunk modified by a task is written back to the stored location, with contending updates ``merged'' (\S\ref{subsubsec:write_back}).
\end{enumerate}
For iterative applications, each executed task in a stage dictates a task to be executed in the next stage, and this execution pipeline is repeated, until all iterations are done.

\subsection{Communication Forest for Handling Skew}
\label{subsubsec:comm_forest}

Although we assume a fully connected network, tasks and data cannot be co-located via direct push or direct pull, as hot data would cause severe load imbalance.
A circular challenge exists: \textit{we need a preprocessing Phase~1 to gather contention information for later load balancing, but Phase~1 itself must stay load-balanced before contention is even known}.

Inspired by~\cite{jelasity2005gossip,vadhiyar2000automatically}, in Phase~1, rather than exchanging all \taskdata in a single communication round, a tree-structured network in Fig.\ref{fig:counting_tree} is used to gradually push the information of tasks (which reside in the leaf machines and need to access a data chunk in the root machine $r$ of the tree) toward their target data, moving up one tree level per round.
This prevents tasks accessing the same hot data chunk from overloading machine $r$.
Formally, a \textbf{\textit{communication tree}} rooted at machine $i$ is a network with a balanced tree structure with $P$ leaves and fanout $F$, as shown in Fig.\ref{fig:counting_tree}.
Each of the $P$ leaf nodes represents a different physical machine, while internal nodes correspond to virtual transit machines, each of which is mapped to one of the $P$ machines via a hash function globally known to all machines.
Notably, a physical machine can simultaneously serve multiple roles: leaf, internal node, root.

The communication tree collects information about all tasks that need to access data stored on machine~$i$.
It enables efficient collection of task-related information (such as reference counts) with load-balanced communication.
\S\ref{subsubsec:meta_task} pre- sents the format of information passed along this network.

\begin{figure}[t]
    \centering
    \includegraphics[width=0.8\linewidth]{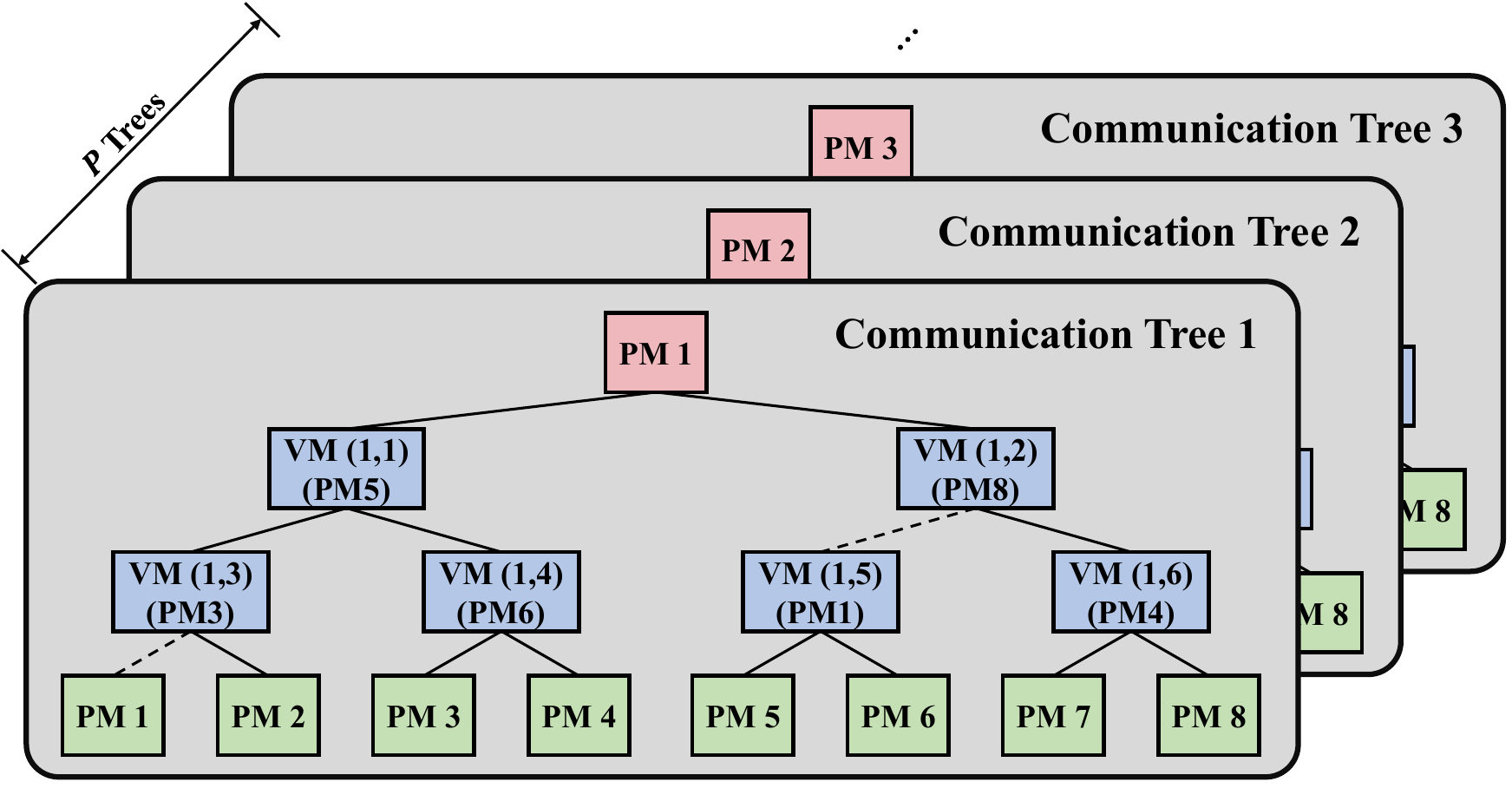}
    \vspace{-0.7em}
    \caption{A \textit{communication forest} with fanout $F=2$ for $P=8$ machines. Green nodes are source machines (w/tasks), red nodes are destination machines (w/data), and blue nodes are intermediate transit virtual machines (VMs) with IDs $(\text{root}, \text{BFS\_id})$, mapped to physical machines (PMs) via example hash function $h(x,y)=(x+3y)\bmod8 + 1$. Dashed edges mean that PM does not need to send messages to itself.}
    \vspace{-1em}
\label{fig:counting_tree}
\end{figure}

In the Fig.\ref{fig:counting_tree} example, Phase~1 needs three rounds (the height of trees).
Suppose machines~2 and~3 each have a task accessing data on machine~1.
In round~1, each task sends a packed message (data address and task info) to its respective parents, VM(1,3)=PM3 and VM(1,4)=PM6.
In round~2, messages are forwarded to VM(1,1)=PM5.
In the final round, an aggregated message is delivered to machine~1.
This message aggregation prevents the overloading of a single machine.

A \textbf{\textit{communication forest}} consists of $P$ such communication trees, each rooted at a distinct machine.
All tasks send messages in parallel along the trees associated with the machines that store their target data chunks.
Although the internal transit machines are statically chosen within the communication forest, each data chunk is stored on a random machine (\S\ref{subsec:abstraction_system_model}).  
This randomization in the corresponding tree selection is equivalent to dynamically selecting transit nodes in both theoretical analysis and practical implementation.

We set $F = \Theta\left(\log P / \log \log P\right)$ using theoretical analysis (\S\ref{subsec:orch_formal_guarantee}),
which also has strong practical performance (\S\ref{sec:key_value_case_study}, \S\ref{sec:eval}).

\begin{figure}[t]
    \centering
    \includegraphics[width=0.88\linewidth]{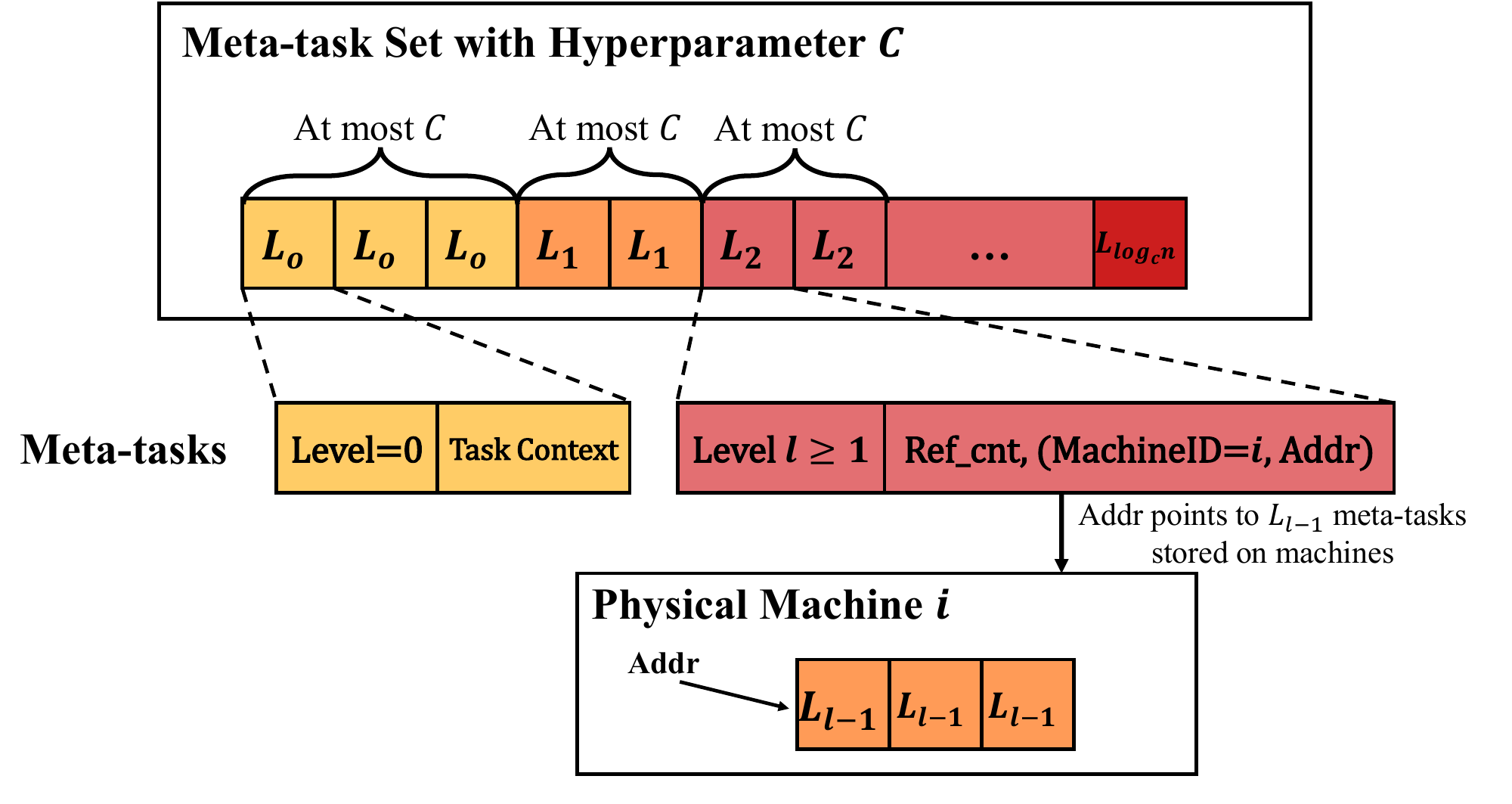}
    \vspace{-1.2em}
    \caption{A \textit{meta-task set}, zooming in on $L_0$ and $L_l$ ($l\geq 1$) meta-tasks. Deeper colors indicate higher levels. $L_0$ contains original task contexts, while for $l \geq 1$, $L_l$ contains a pointer to an $L_{l-1}$-meta-task array stored on a machine.}
    \vspace{-1.0em}
\label{fig:meta_task}
\end{figure}

\subsection{Meta-Task Structure}
\label{subsubsec:meta_task}

Before \ourname performs orchestration of tasks and data in Phase~2, their information must first be exchanged via message passing.
These messages support co-location by carrying \romenum{1} the reference count of each data chunk and \romenum{2} the locations of the tasks requesting it.
They should also be \textit{aggregatable} to prevent unbounded growth in message sizes.

To meet these goals, we introduce a \textbf{\textit{meta-task}} structure---parameterized by $C$ and $L$---to serve as messages in Phase~1 within the tree-structured network.
At a high level, if the reference count of a data chunk is $\le C$, the meta-task contains the full contexts of all requesting tasks.
If the reference count exceeds $C$, the meta-task includes only aggregated metadata, including the count of tasks and a pointer to retrieve their locations,
thereby preventing unbounded message growth.

\begin{figure}[t]
    \centering
    \includegraphics[width=\linewidth]{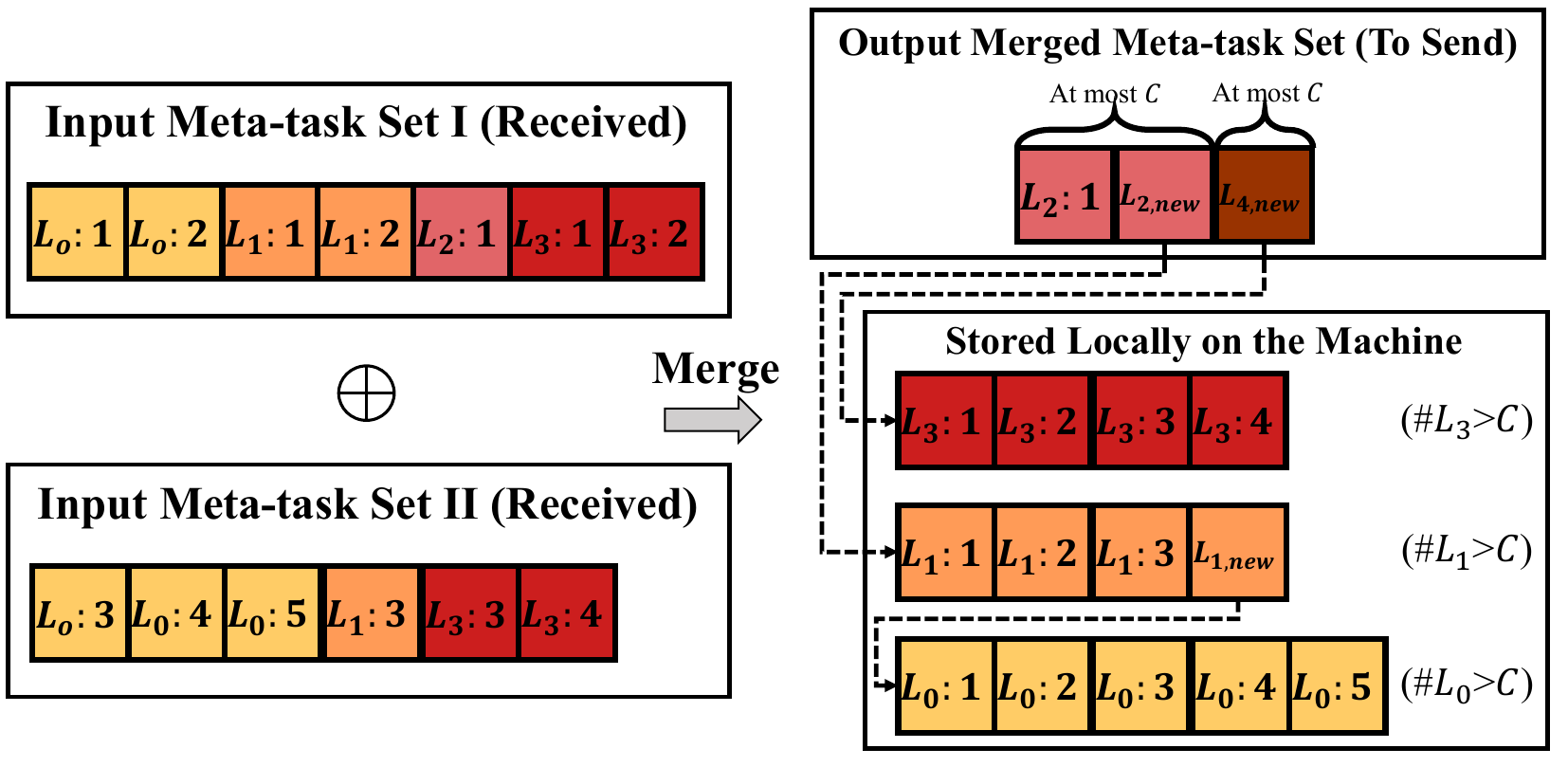}
    \vspace{-2.2em}
    \caption{\textit{Merging} two meta-task sets with $C=3$. Meta-tasks on lowest level $L_0$ are merged first and generate $L_{1,new}$. Then $L_1$-meta-tasks (including both the received $L_1$ and the newly generated $L_{1,new}$) are merged and generate $L_{2,new}$. Continue the process until the top level of $L_4$.}
    \vspace{-1.2em}
\label{fig:merge_meta_task}
\end{figure}

The attribute $L$, which serves as a level indicator, enables hierarchical aggregation of meta-tasks.
Raw task contexts are treated as $L_0$ meta-tasks.
During message passing in the communication forest, if an intermediate machine receives more than $C$ $L_l$ meta-tasks, it locally stores them and creates a new $L_{l+1}$ meta-task.
This new $L_{l+1}$ meta-task contains two pieces of information:  
\romenum{1} the aggregated reference count, and  
\romenum{2} the (machine ID, address) pair indicating where the $C$ underlying $L_l$ meta-tasks are stored.
Only this aggregated $L_{l+1}$ meta-task will be forwarded to its parent machine.

A \textbf{\textit{meta-task set}}, as Fig.\ref{fig:meta_task} shows, is a set structure containing multiple meta-tasks.
The number of meta-tasks in a meta-task set with the same $L$ value should not exceed $C$.

A \textit{merge} of two meta-task sets works like adding two base-$C$ integers: the meta-tasks at level $L_l$ are the digit in the $C^l$ place, at most $C$ per level.
The merge unions all meta-tasks from both sets, then scans from the lowest level up.
Whenever a level $L_l$ exceeds $C$ it \textit{carries}---those meta-tasks are stored locally and replaced by one $L_{l+1}$-meta-task pointing to them.
Carries propagate up until every level holds at most $C$.
For example in Fig.\ref{fig:merge_meta_task}, $L_0$ overflows and carries $L_{1,\text{new}}$. 
$L_1$ level, now including $L_{1,\text{new}}$, overflows and carries $L_{2,\text{new}}$.
The $\le C$ $L_2$-meta-tasks are unchanged.
Continue to the top.

This merging process guarantees that each meta-task set has bounded size $C\log_C n$, while containing sufficient information to enable efficient \taskdata exchange in Phase~2.

We use $C=\Theta(B/\sigma)$ in \ourname following the theory-guided design in \S\ref{subsec:orch_formal_guarantee}, which also leads to strong practical performance as shown in \S\ref{sec:key_value_case_study} and \S\ref{sec:eval}.

\subsection{Distributed Push-Pull for Co-location}
\label{subsubsec:dist_push_pull}

After Phase~1, each data chunk receives a meta-task set that contains all information needed for \taskdata co-location.
In Phase~2, we introduce \textbf{\textit{distributed push-pull}}, a strategy that exploits the bidirectional flow of tasks and data to resolve potential contentions on data chunks.

\myparagraph{Push}
If the reference count of a data chunk is $\le{C}$, then all task contexts requesting this chunk have already been delivered as $L_0$ meta-tasks to the machine storing the chunk, completing Phase~1 for these lightly contended tasks.

\myparagraph{Pull}
For contended tasks where the reference count of a data chunk exceeds $C$, tasks are stored on intermediate transit machines in the communication forest.
A \textit{pull} process---sending data (copies) to tasks---is required in Phase~2.  
Location information in meta-task sets guides the pull.

As an example, suppose a data chunk receives a meta-task set with the highest level $L_l$.  
It first broadcasts a copy of the chunk to all $\Theta(C)$ machines storing the $L_{l-1}$-meta-tasks, then chases pointers and broadcasts the data to the $\Theta(C^2)$ machines storing the $L_{l-2}$-meta-tasks, continuing until all $L_0$ meta-tasks (the original task contexts) have received a copy.

This broadcast goes through an inherent tree-structured network formed by a meta-task set, which we define as the \textbf{\textit{meta-task tree}}.
This tree is more compact than the communication tree, forming a full $C$-ary tree structure with all leaf machines storing $L_0$-meta-tasks and no empty leaves.

\myparagraph{Key Takeaway}
Bi-directional distributed push-pull
combines push and pull: lightly contended tasks are directly pushed to their target data, whereas more contended tasks pull their data gradually along the created meta-task tree.

\subsection{Write Backs}
\label{subsubsec:write_back}
If tasks require write-backs that are \textit{merge-able} (defined below), they can be efficiently aggregated, allowing the write-backs from multiple tasks to the same data chunk to be combined.
In Phase~4, these write-backs are propagated along a similar tree-structured path over the \textsc{OutputPointer}, with aggregation performed during message passing.

\vspace{-0.1em}
\begin{dfn}[Merge-able Operations]
\label{dfn:mergeable}
    An operation $\oplus$ is \textbf{merge-able} if there exist operations $\odot$ and $\otimes$ such that for any data $x$ and $y_i(1\leq i\leq n)$, $x\oplus y_1 \oplus ... \oplus y_n=x\odot(y_1 \otimes ... \otimes y_n)$.
\end{dfn}
\vspace{-0.1em}

Merge-able operations $\oplus$ are commonly seen, including and not limited to \romenum{1} idempotent operations ($\odot$ is $\oplus$ and $\otimes$ keeps one argument), \romenum{2} set-associative operations ($\odot$ and $\otimes$ are both $\oplus$), \romenum{3} concurrent writes where only one random write operation succeeds in a concurrent batch ($\odot$ is \textsc{write} and $\otimes$ is \textsc{random select}), and \romenum{4} write operations where concurrent writes are resolved using a deterministic decision process, e.g., within a batch only the one with the largest timestamp / transaction ID succeeds ($\odot$ is \textsc{write} and $\otimes$ is finding the value to write using the decision process).

\subsection{Theoretical Guarantees for \ourname}
\label{subsec:orch_formal_guarantee}

\Cref{theorem:main} proves \ourname to be \textbf{\textit{asymptotically optimal}} compared with the lower bounds provided in~\cite{roughgarden2018suffles}.
Please refer to \suppref{appendix:main_proof} for its proof.

\vspace{-0.2em}
\begin{theorem}[Main Theorem]
\label{theorem:main}
    Assume $n=\Omega(\frac{PB\log^{3}{P}}{\log\log{P}})$.
    If each machine holds $\Theta(n/P)$ tasks when a stage begins, and write backs are merge-able, then
    \romenum{1} \ourname takes $O(\frac{n}{P}(\min\{B,$ $\sigma\}+\log_{\frac{n}{P}}{P}))$ communication time \whp\footnote{We use $O(f(n))$ \textbf{with high probability} (\textbf{\whp}) (in $n$) to mean $O(cf(n))$ with probability at least $1-n^{-c}$ for $c\geq{1}$.} in this stage;
    and \romenum{2} each machine executes $\Theta(\frac{n}{P})$ tasks \whp.
\end{theorem}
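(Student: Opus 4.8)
The plan is to analyze the four substages separately, bounding both the communication time (for part (i)) and the per-machine task count at termination (for part (ii)), and then sum the communication bounds. Throughout, I will choose the fanout $F$ and the chunk size $C$ of the meta-task sets as functions of $n$ and $P$ (the natural choice is $F = \Theta(n/P)$ and $C = \Theta(\log P)$, or similar, so that $\log_F P = \Theta(\log_{n/P} P)$), and the assumptions $n = \Omega(P \log^3 P / \log\log P)$ and $n = O(P^\epsilon)$ are exactly what is needed to make these choices consistent and to drive the concentration bounds.

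First I would handle the \textbf{Contention Detection} substage. The key object is the communication forest: $P$ trees, each a balanced $F$-ary tree with $P$ leaves, internal nodes hashed to machines. The total work pushed into the forest is $O(n)$ messages at the leaves. I would argue: (a) in each of the $O(\log_F P)$ rounds, the number of (machine, tree-level) slots a given physical machine is responsible for, summed with the incoming message load, concentrates around its expectation by a Chernoff/Bernstein bound over the random hash — this uses $n/P \geq \operatorname{polylog} P$ so that the expected load dominates the deviation term whp; (b) crucially, the \emph{merge} operation on meta-task sets keeps every message bounded: at a node that is the root-side image of many tasks, the meta-task set it forwards has size $O(C \log_C(\text{subtree size})) = O(C \log_F P)$ words rather than proportional to the number of tasks, because overflowing levels are spilled to local memory and replaced by a single pointer. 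So each message is $O(C\log_F P)$ words and each machine sends/receives $O((n/P)\cdot\text{(message size)})$ — wait, more carefully, each machine handles $O(n/P)$ \emph{distinct} forwarding obligations whp, each of bounded size, giving $O(\frac{n}{P}\log_{n/P}P)$ communication over the whole substage after summing rounds. I would also note spilled meta-tasks stay put and are not re-communicated, so they do not inflate the bound.

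Next, \textbf{Task-Data Co-location}: data copies broadcast from each data-holding machine down the subset tree implicitly encoded by the received meta-task set. The subset tree has the same shape as the meta-task hierarchy, so its fanout at each node is $\le C$ and its height is $O(\log_C(\text{contention})) = O(\log_F P)$; hence each machine forwards each data chunk to $O(C)$ children, and the total data volume is $O(n)$ chunks (one chunk copy per original task request, plus $O(1)$ per internal subset-tree node), load-balanced by the same hashing-concentration argument. Low-contention data ($<C$ requesters) is pushed directly in one hop, contributing $O(n/P)$ per machine. \textbf{Task Execution} is purely local and is accounted in $W$, contributing nothing to communication. \textbf{Data Write-backs} run the merge-able reduction up the same subset trees; by \Cref{dfn:mergeable} the $\otimes$-combined partial results have bounded size per edge, so this substage mirrors co-location in reverse and adds only a constant factor. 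Summing the four substages gives $O(\frac{n}{P}\log_{\frac{n}{P}}P)$ whp, matching the lower bound of~\cite{roughgarden2018suffles}.

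Finally, part (ii): at the end, each machine holds the $L_0$ meta-tasks (the original task contexts) that either (a) never left it, (b) were pushed to it as a low-contention requester's data host, or (c) were spilled to it during a merge. Categories (a) and (b) are bounded by $O(n/P)$ whp by the hashing argument (the data chunks are evenly distributed initially and each is requested some number of times summing to $O(n)$; the contributions to any one machine concentrate). For (c), I would bound the total number of spilled meta-tasks across the whole forest: a spill of $C$ meta-tasks happens only when a level overflows, and a charging argument shows the number of spill events is $O(n/C)$, each depositing $O(C)$ meta-tasks, for $O(n)$ spilled meta-tasks total, again hashed to landing machines with concentration $O(n/P)$ per machine. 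Combining, $\Theta(n/P)$ whp, with the lower bound $\Omega(n/P)$ trivial since $n$ tasks on $P$ machines.

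The \textbf{main obstacle} I anticipate is the concentration argument in the Contention Detection substage: the load on a machine in a given round is a sum of contributions indexed by \emph{both} which virtual transit nodes hash to it \emph{and} how many tasks flow through each, and these are not independent across trees (a single adversarially-hot data chunk induces correlated load in one tree). Getting a clean whp bound requires either (1) arguing that each individual communication tree contributes $O((n/P)/\operatorname{polylog})$ whp so that a union bound over the $P$ trees and $O(\log_F P)$ rounds still goes through — which is precisely where the $n = \Omega(P\log^3 P/\log\log P)$ slack is spent — or (2) a more careful martingale/bounded-differences argument treating the hash function's randomness. I would also need to verify carefully that the meta-task \emph{size} bound $O(C\log_F P)$ holds deterministically (independent of the hash), so that the only randomness is in the \emph{count} of obligations per machine; this separation is what makes the union bound tractable.
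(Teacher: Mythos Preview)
Your high-level decomposition---analyze each substage, bound per-machine communication per round via concentration over the random hash of virtual nodes to machines, then sum over $O(\log_F P)$ rounds---matches the paper's. The paper's proof is much terser: it invokes a single weighted balls-into-bins lemma (balls of total weight $W$ and individual weight at most $W/(P\log P)$ thrown uniformly into $P$ bins yield $\Theta(W/P)$ per bin whp) and simply checks that virtual-node loads satisfy the weight cap; your part~(ii) is more detailed than the paper's one-line reapplication of the same lemma.

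Two concrete issues, both in the parameters.

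\textbf{Fanout and chunk size.} The paper takes $C=\Theta(1)$ and $F=\Theta\bigl(n/(P\log^2 P)\bigr)$, not your $F=\Theta(n/P)$, $C=\Theta(\log P)$. The smaller $F$ is deliberate: a virtual node receives from $F$ children, each contributing (per data chunk) a meta-task set of size $O(C\log_C n)=O(\log P)$, so the per-node weight is $O(F\log P)=O\bigl(n/(P\log P)\bigr)$---exactly the cap the balls-into-bins lemma needs. With your $F=\Theta(n/P)$, a single virtual node can receive $\Omega\bigl((n/P)\cdot\mathrm{polylog}\,P\bigr)$ words, which already exceeds the target mean $n/P$; neither the lemma nor a direct Bernstein bound then concentrates (the $\max w_v\cdot\log P$ term dominates). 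The hypothesis $n=\Omega(P\log^3 P/\log\log P)$ is spent precisely here: after shaving $\log^2 P$ off $F$, one still has $\log F=\Theta(\log(n/P))$ and hence $\log_F P=\Theta(\log_{n/P} P)$.

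\textbf{Meta-task set size.} You write the forwarded set size as $O(C\log_F P)$, but the number of levels in a meta-task set is governed by the chunk parameter $C$, not the communication-tree fanout $F$; the correct bound is $O(C\log_C n)$. With the paper's $C=\Theta(1)$ and $n=O(P^\epsilon)$ this is $O(\log P)$. Your anticipated ``main obstacle'' (the correlated-load concentration argument) largely dissolves once $F$ is chosen small enough that each virtual node fits under the balls-into-bins weight cap directly.
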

\vspace{-0.2em}

\myparagraph{Key Takeaways}
\romenum{1} \textbf{\textit{Optimal Communication}}: TD-Orch incurs optimally minimal communication time as $P$ scales up. 
\romenum{2} \textbf{\textit{Optimal Computation}}: TD-Orch balances computation work, maximizing system throughput.
\romenum{3} \textbf{\textit{Inductive Execution}}: Tasks remain load-balanced across machines after each stage, enabling efficient execution of subsequent stages.

\section{Case Study I: \KV Store}
\label{sec:key_value_case_study}

\begin{figure}
    \centering
    \includegraphics[width=\linewidth]{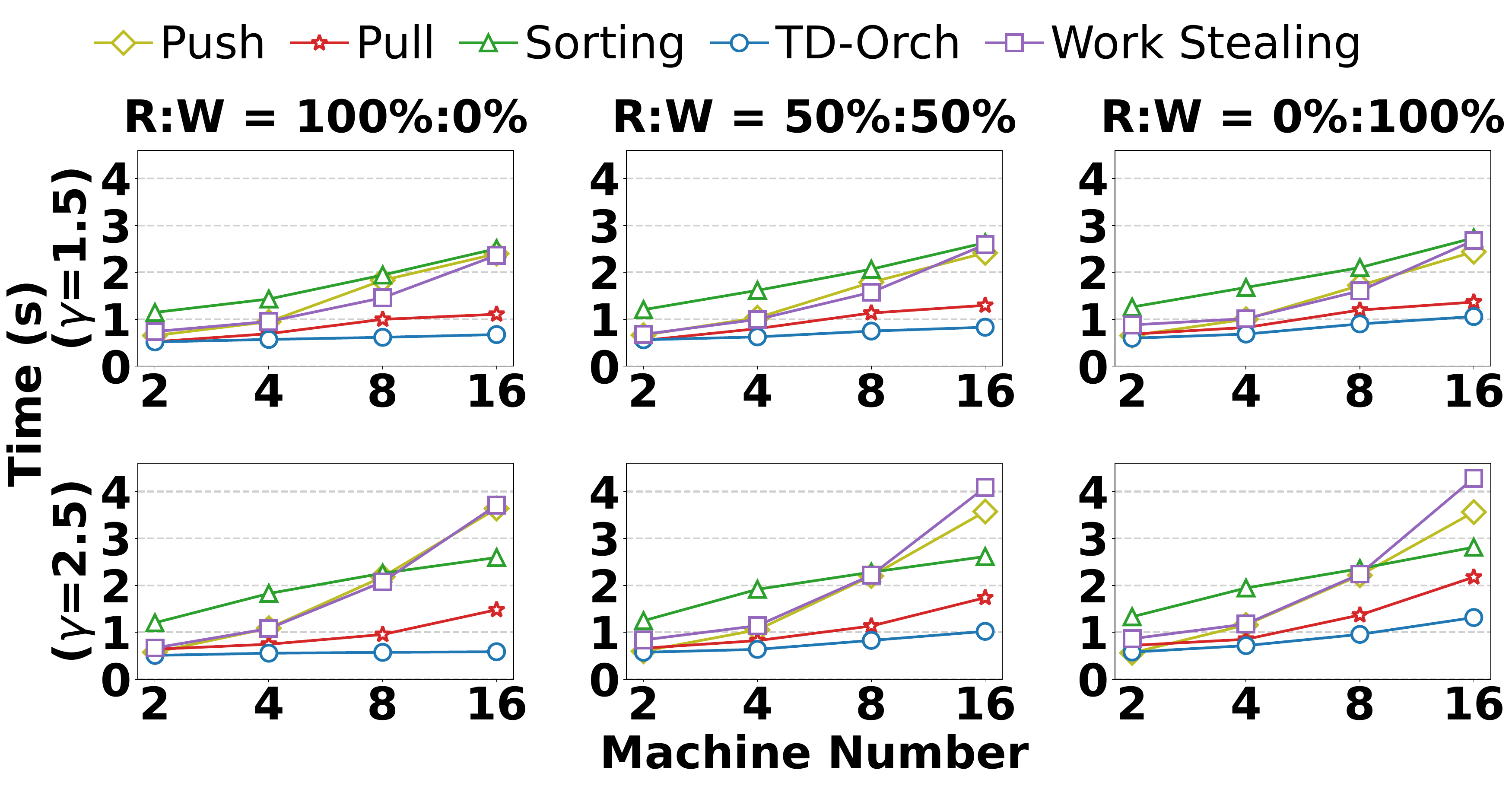}
    \vspace{-2.2em}
    \caption{Runtime results (in seconds) for YCSB workloads A (50\% reads, 50\% writes), C (read-only), and LOAD (write-only). \ourname scales well and outperforms prior methods. Results for YCSB-B (95\% reads, 5\% writes) and $\gamma$=1.6--2.4 exhibit similar trends and are omitted due to space constraints.}
    \vspace{-0.8em}
    \label{fig:gamma_linecharts}
\end{figure}

A natural fit for our orchestration framework is a \kv store, which we implement atop TD-Orch using a concurrent distributed hash table as the underlying storage structure.

We compare \ourname with Direct Push, Direct Pull, Sorting-based (described in \S\ref{subsec:preliminary_prior_work}), and a decentralized work stealing load balancer~\cite{blumofe1999scheduling}, all using MPI for inter-machine communication.
Direct Push, Direct Pull, Work Stealing, and \ourname are implemented directly, while Sorting is built on KaDiS~\cite{axtmann2015practical,axtmann2018lightweight,axtmann2017place,harsh2019histogram}, a state-of-the-art MPC sorting library.
In work stealing, each machine initially owns the tasks for its keys.
Tasks on overloaded machines are dynamically migrated to idle machines, with the required data fetched via a Direct-Pull path.
All methods follow the interface in Fig.\ref{fig:orchestration_interface}.

\myparagraph{Platform Details}
We use a cluster of 1,024 cores---16 machines each equipped with AMD Opteron 6272 processors (64 cores, 2.1 GHz), 64 MB L3 cache, and 128 GB memory.
Machines are interconnected via a 10~Gbps network.

\myparagraph{Workloads}
We evaluate the orchestration methods on YCSB workloads~\cite{cooper2010benchmarking} A (50\% reads, 50\% writes), B (95\% reads, 5\% writes), C (read-only), and LOAD (write-only), with data access patterns following Zipf distribution~\cite{zipf2016human}, a common setting in database/storage benchmarks (e.g.,~\cite{kang2022pimtreevldb,luo2023smart}).
Each task fetches a data item, performs a multiply-and-add operation, and then writes the value back if updated.

\myparagraph{Execution Time}
Fig.\ref{fig:gamma_linecharts} shows the results of weak-scaling experiments with $P=2$--$16$ machines.
In this setup, each machine is initialized with 2~million tasks in each batch, with target data addresses drawn from Zipf exponents $\gamma=1.5$--$ 2.5$.

\ourname is \romenum{1} resilient to high skew and delivers near-linear performance under weak scaling, and \romenum{2} significantly faster than other methods.
Across 2$\sim$16 machines, \ourname achieves geometrically-averaged speedups of $2.09\times$, $1.42\times$, $2.83\times$ and $2.34\times$ over the other four baselines, respectively.

\myparagraph{Load Balance}
In the 16-machine experiment for $\gamma=2.5$, \ourname achieves a coefficient of variation across machines of $0.09$, compared with $0.29$/$0.15$ for Direct-Push/Direct-Pull, respectively.
Work stealing achieves a coefficient of $0.27$, because every hot-key access is still fetched from the data owner, so dynamically migrating tasks alone cannot relieve data contention.
Although Sorting achieves the same value of $0.09$, it is substantially slower in terms of absolute runtime.

\myparagraph{Ablations}
We conduct three additional ablations on \ourname.
Full results are presented in \suppref{sec:appendix_kv_ablation}.
\romenum{1} Component Breakdown: A runtime breakdown of each phase in \ourname is presented.
\romenum{2} Memory Overhead: \ourname{'s} tree-structured metadata adds only a small peak-memory overhead over Direct Pull (avg $9.7\%$, up to $15.5\%$).
\romenum{3} Hyperparameter Sensitivity:
When $F$ and $C$ are varied within the same \textit{asymptotic regime}, runtime changes by less than $20\%$, showing that the theoretical guarantees in \S\ref{subsec:orch_formal_guarantee} translate into \textit{stable} practical performance.

\section{Case Study II: Graph Processing Systems}
\label{sec:graph_processing_system}

As a more extensive case study, we implement a distributed graph processing system named \textbf{\ourgraph} on top of our \ourname framework to show the effectiveness of \ourname.
Following common practices in prior vertex-centric works~\cite{zhu2016gemini,wu2021seastar,wang2023scaleg,zhu2022vctune}, \ourgraph maintains two primary storage structures: one for vertices, which hold their intermediate values and are pinned to specific machines based on a predefined mapping during graph ingestion; and another for edges and weights, which is dynamically orchestrated by the system.

\myparagraph{Interface}
Our graph processing engine is primarily executed through a combined interface of \ourname (Fig.\ref{fig:orchestration_interface}) and the \EDGEMAP abstraction.
Ligra~\cite{ShunB2013,shun2015ligraplus,dhulipala2017julienne} and GBBS~\cite{dhulipala18scalable} introduce \EDGEMAP, which works on a single machine and takes as input a graph $G$, a \VERTEXSUBSET $U$, and two boolean functions, $F_1$ and $F_2$.
It applies $F_1$ to each edge $(u,v)$ iff $u\in U$ and $F_2(v)=true$, returning a new \VERTEXSUBSET $U^{\prime}$ consisting of all vertices $v$ for which $F_1(u,v)$ is applied and returns true.
\EDGEMAP supports two execution modes—sparse and dense—ensuring that it runs in $O(\sum_{u\in U} \text{deg}(u))$ work (total computation) and $O(\log n)$ depth for $n$-node graphs.

We refer to our new interface as \textbf{\DISTEDGEMAP}, which is defined in Fig.\ref{fig:distedgemap}.
We treat $G$ as directed, representing each undirected edge $\{u,v\}$ as two directed edges $(u,v)$ and $(v,u)$.
$U$ is a \DISTVERTEXSUBSET, which stores on each machine a \VERTEXSUBSET holding the active vertices that machine owns.
An execution function $f$ is applied to all edges $(u,v)\in E$ where $u \in U$, and returns a value.
All returned values are aggregated by \DISTEDGEMAP using a user-defined merge-able operator \textit{merge\_value}, and the result is \textit{write\_back} to $v$.
\DISTEDGEMAP outputs a \DISTVERTEXSUBSET comprised of the vertices whose \textit{write\_back} returns true.

As Fig.\ref{fig:distedgemap} shows, \DISTEDGEMAP deliberately mirrors the single-machine \EDGEMAP, hiding distributed architecture and scheduling from the user.

\begin{figure}
\begin{lstlisting}
function Dist_Edge_Map(
    G=@$(V,E)$@: Directed Graph,  
    U: DistVertexSubset,  // Current iteration
    f: function,  // Execution
    write_back: function,
    merge_value: function  // Aggregate wb
) -> DistVertexSubset;  // For next iteration
\end{lstlisting}
\vspace{-1.5em}
\caption{\DISTEDGEMAP interface}
\vspace{-1.2em}
\label{fig:distedgemap}
\end{figure}

\myparagraph{Example Graph Algorithms}
We implement five representative graph algorithms using \DISTEDGEMAP in our system: breadth-first search (\textbf{BFS}), single-source shortest path (\textbf{SSSP}), betweenness centrality (\textbf{BC}), connected components (\textbf{CC}), and PageRank iteration (\textbf{PR}).
In all of them, users define only $f$ and \textit{write\_back}, then launch \DISTEDGEMAP.
Please refer to \suppref{sec:appendix_bc_alg} for an implementation of a complex algorithm such as BC in \textit{\textbf{fewer than 70 lines of code}} using our interface, which is much more concise than prior works (e.g., 400+ lines in Gemini~\cite{zhu2016gemini}).

\subsection{Main Design of \ourgraph}
\label{subsec:overall_design_outline}

As outlined in \S\ref{subsec:abstraction_application}, \DISTEDGEMAP instantiates Fig.\ref{fig:orchestration_interface} with one task per edge---\textit{edges correspond to tasks, and vertex values correspond to data}.
The task for edge $(u,v)$ carries the edge as its \textsc{LocalContexts}, the values of $u$ and $v$ as its \textsc{InputPointer}/\textsc{OutputPointer}, and $f$ as its lambda.
Its \textit{write\_back} operation, together with \textit{merge\_value}, implements the mergeable write-back described in \S\ref{subsubsec:write_back}.
A key observation is that contention for a vertex value from its incident edge tasks is proportional to the vertex degree, which is not/slowly changing unless the graph is substantially updated.

Based on this observation, \ourname can be invoked \textbf{\textit{periodically}} every few stages, to mitigate skew during the \DISTEDGEMAP in between.
Subsequent \DISTEDGEMAP stages reuse these pre-orchestrated flows instead of re-orchestrating dynamically at every stage.
This periodic load balancing is especially beneficial for highly skewed graphs, such as the power-law graphs~\cite{gonzalez2012powergraph} commonly encountered in practice.

This periodic process consists of two stages.
In the first stage, we begin by randomly placing each edge on an arbitrary machine and then execute one round of \ourname, which attempts to colocate all edges with their source vertices.
Edges whose sources are low-degree vertices are now placed on the same machine with the vertices.
For edges with high-degree sources, the meta-task tree from \S\ref{subsubsec:dist_push_pull}, referred to here as a \textit{source tree}, captures the source-side communication paths.
We then fix the storage location of these edges and propagate future source-vertex values through the source trees.
In the second stage, another round of \ourname is performed, where each edge, starting from its location at the end of the first stage, tries to reach its destination vertex.
This similarly constructs a meta-task tree, denoted \textit{destination tree}, along which destination-vertex values are written back using $write\_back$.
Subsequent iterations reuse these established flows without re-orchestration.
Phases~1--2 are thus amortized, and each iteration executes only Phase~3 (where edge tasks run where their edges are stored) and Phase~4 (where results are written back along the trees).

Adopting this periodic processing provides two key benefits:  
\romenum{1} it organizes edge storage into chunks (possibly on transit machines, rather than on source/destination machines, if highly contended), ensuring balanced load; and  
\romenum{2} it establishes a tree-shaped communication framework for both value fetch and \textit{write\_back}, enabling aggregation communication.
This design differs from prior ghost/mirror node designs (e.g.,~\cite{zhu2016gemini,gonzalez2012powergraph}), which neither use transit machines to mitigate contention nor employ tree-structured networks to improve communication costs.  
Consequently, our design achieves better load balance and improved performance (\S\ref{sec:eval}).

\myparagraph{Sparse-Dense Execution}
We design a dual-mode execution flow for distributed graph processing that can switch between \textit{sparse} and \textit{dense} modes to efficiently exploit the data layouts provided by \ourname.
The mode is determined by the size of the current \DISTVERTEXSUBSET.
Specifically, the \textit{sparse} mode is optimized for small \DISTVERTEXSUBSET, while the \textit{dense} mode is optimized for large ones.

In the \textit{sparse} mode, a vertex-centric execution flow is used: each active vertex $u \in U$ propagates its value down the source tree, activating all corresponding edges.
The result values are then sent and merged along the destination tree.

In contrast, the \textit{dense} mode uses an edge-centric execution flow.
All active vertices $u \in U$ broadcast their values to all machines.
Each edge $(u,v)$ is then executed if $u \in U$.

The choice between sparse and dense mode is dynamically determined by the size of the active frontier in the current round.
Specifically, if $\Sigma_{u\in{U}}\deg(u)=o(P\cdot|U|)$ where $d(\cdot)$ denotes the out-degree of inter-machine edges, the sparse mode is used; otherwise, the dense mode is used.
This dynamic switching ensures that the execution remains both work- and communication-efficient in theory (with work/communication of $O(\Sigma_{u\in{U}}\deg(u))$ as in Ligra and load-balanced due to \ourname), while also achieving high practical performance.

\subsection{Graph-Specific Implementation Techniques}
\label{subsec:implementation_technique_outline}

\ourgraph applies multiple additional techniques orthogonal to \ourname, which can be grouped into three categories.
Due to space limit, we summarize here their key functionalities.
Details are given in \suppref{appendix:implementation_techniques}.

\myparagraph{(T1) Optimized Communication}
We use deduplication in all messages, and destination-vertex-based rank filtering in broadcast to further reduce inter-machine communication.

\myparagraph{(T2) Work-Efficient Local Computation}
Our design leverages ParlayLib~\cite{blelloch2020parlaylib}, a state-of-the-art parallel library, to implement work-efficient computations.
We improve \VERTEXSUBSET with a concurrent hash table + bitmap, and further develop a lightweight local \EDGEMAP.
\DISTEDGEMAP also accepts an optional \textit{filter\_dst} function, excluding vertices that are guaranteed not to be destinations in current round.

\begin{table}[t]
\small
\centering
\caption{Graph Algorithms Bounds ($n$ nodes, $m$ edges).}
\label{tab:graph_bounds}
\vspace{-0.75em}
\begin{tabularx}{3.33in}{@{}c@{\ }ccccc@{}}
\toprule
\textbf{Algorithm} & BFS & SSSP & BC & CC & PR \\
\midrule
Computation & $O\left(\frac{m}{P}\right)$ & $O\left(\frac{m}{P}\right)$ & $O\left(\frac{m}{P}\right)$ & $O\left(\frac{m}{P}\cdot{\text{diam}}\right)$ & $O\left(\frac{m}{P}\right)$ \\
Communication & \multicolumn{5}{c@{}}{All equal to $O(\text{Computation}\cdot{\log_{\frac{n}{P}}{P}})$} \\
\bottomrule
\end{tabularx}
\vspace{-1.0em}
\end{table}

\myparagraph{(T3) Aligned Coordination between Components}
We design tighter coordination between system components---between threads and scheduler, among threads to prevent cache thrashing, and between local and global execution.

\subsection{Theoretical Guarantees for \ourgraph}

Since \Cref{theorem:main} applies to \DISTEDGEMAP, all of our algorithms are work-efficient and load-balanced across machines.
The theoretical bounds summarized in Tab.\ref{tab:graph_bounds} are achieved when $n=\Omega(P\log^{2}{P})$.
While most prior works~\cite{zhu2016gemini,mofrad2020graphite,ahmad2018la3,yan2015effective,lu2014large,yan2014pregel,wang2021gnnadvisor,serafini2021scalable} focus on empirical system improvements, we place strong emphasis on graph-theoretic guarantees as guidance for implementation.
In \S\ref{subsec:e2e_results}, these guarantees will translate into practical performance improvements.

\subsection{Prior Distributed Graph Processing Systems}
Classic distributed frameworks such as Pregel~\cite{malewicz2010pregel,bu2014pregelix,han2015giraph} and MapReduce~\cite{dean2008mapreduce} inspired early graph systems,
such as
GraphLab~\cite{low2012distributed},
PowerGraph~\cite{gonzalez2012powergraph},
GraphX~\cite{gonzalez2014graphx},
and more~\cite{bulucc2011combinatorial,sundaram2015graphmat,gadepally2015graphulo,khayyat2013mizan,yuan2014fast,kang2009pegasus,xie2015sync,venkataraman2013presto,edmonds2013expressing}.
We classify more recent graph systems into two families:
\romenum{1} \textbf{\textit{Graph-algorithm-based}}:
Gemini~\cite{zhu2016gemini},
Gluon~\cite{dathathri2018gluon},
PowerLyra~\cite{chen2019powerlyra},
and more~\cite{lee2015scaling,hong2015pgxd,lin2018shentu,li2018regraph,dathathri2019phoenix,li2019topox,liu2024faasgraph,wang2016gunrock,pai2016compiler,liu2015enterprise,pandey2021terrace,zheng2015flashgraph,yin2023glign,wang2018lazygraph,kusum2016efficient,li2023liberator,liu2023mbfgraph,niu2026diggerbees,zhang2017finepar,yu2021dfograph,gan2024graphcube,harshvardhan2015hierarchical,sun2025helios,wang2026elasgnn,vora2017coral,firoz2018runtime,cheramangalath2017dhfalcon,newaz2025locality,cap2022scaling}; and
\romenum{2} \textbf{\textit{Linear-algebra-based}}:
GraphScope~\cite{fan2021graphscope},
GraphPad~\cite{anderson2016graphpad}, BLAS~\cite{azad2022combinatorial},
and more~\cite{mofrad2020graphite,mofrad2019efficient,ahmad2018la3,liu2015csr5,hong2019tiling,li2013smat,wang2013augem,gianinazzi2024arrow}.

Compared with prior graph-algorithm-based systems, our \ourgraph achieves stronger theoretical bounds and higher practical performance, by better exploiting graph \textit{sparsity} and implementing genuine \textit{work-efficient} graph algorithms.

Compared with linear-algebra-based designs,  our \ourgraph, achieves stronger provable and empirical \textit{load balance} by using transit machines in its underlying \ourname to migrate contention.
In contrast, prior LA-based approaches alleviate only limited contention through empirical designs.

\section{Evaluation of Graph Processing Systems}
\label{sec:eval}

\subsection{Experimental Setup}
\label{subsec:experimental_setup}

\myparagraph{Platform}
We use the same 1024-core cluster as in \S\ref{sec:key_value_case_study}.

\myparagraph{Competitors}
Due to their state-of-the-art performance, we select \textit{GeminiGraph}~\cite{zhu2016gemini}, \textit{Graphite}~\cite{mofrad2020graphite}, and \textit{LA3}~\cite{ahmad2018la3} as baselines.
\textit{GeminiGraph} is the fastest graph-algorithm-based system, and the latter two are the fastest linear-algebra-based systems, representing the strongest alternatives in both design families (including being faster than~\cite{gonzalez2012powergraph,gonzalez2014graphx,chen2019powerlyra,anderson2016graphpad,apache2017giraph,low2012distributed}).
In addition, we include StarPlat~\cite{behera2024starplat} and Pregel+~\cite{yan2015effective,lu2014large,yan2014pregel} in one end-to-end experiment (\S\ref{subsec:e2e_results}), and the state-of-the-art single-machine design GBBS~\cite{dhulipala18scalable} as ablation (\S\ref{subsec:ablation_study}).

\myparagraph{Datasets}
We evaluate on Reddit~\cite{hamilton2017inductive}, uk-2005~\cite{BoVWFI,BRSLLP}, Twitter-2010~\cite{kwak2010twitter}, Friendster~\cite{yang2012defining}, Hyperlink-2012~\cite{meusel2015graph}, and Road-USA~\cite{road}, covering a wide range of dataset sizes and graph characteristics.
See the first column of Tab.\ref{tab:end_to_end_results}.

{
\footnotesize
\begin{table}[t]
\captionof{table}{End-to-end runtime (in seconds, fastest underlined).
}
\label{tab:end_to_end_results}
\vspace{-0.5em}
        \resizebox{0.97\linewidth}{!}{
        \begin{tabular}{@{} l l r r r r @{}}
        \toprule
        \textbf{Dataset} & \textbf{Alg.} & \textbf{\ourgraph} & \textbf{Gemini} & \textbf{Graphite} & \textbf{LA3} \\
        \midrule
         Reddit& BFS  & \underline{\textbf{0.015}} & 0.151 & 0.130 & 0.157 \\
         4 machines& SSSP & \underline{\textbf{0.071}} & 0.172 & 0.141 & 0.505 \\
         $n=2.33$M& BC   & \underline{\textbf{0.076}} & 0.275 & 0.507 & 0.437 \\
         $m=114$M& CC   & \underline{\textbf{0.016}} & 0.211 & 0.266 & 0.663 \\
         diam\footnotemark[2] $=6^{*}$ & PR   & \underline{\textbf{0.143}} & 0.676 & 0.297 & 1.92 \\
        
        \midrule
         uk-2005 & BFS  & \underline{\textbf{0.759}} & 8.98 & 8.01 & 32.56 \\
         8 machines& SSSP & \underline{\textbf{0.820}} & 10.68 & 13.79 & 45.41 \\
         $n=39.5$M& BC   & \underline{\textbf{3.00}} & 20.23 & 34.79 & 75.90 \\
         $m=482$M& CC   & \underline{\textbf{3.30}} & 11.22 & 20.85 & 63.11 \\
         diam $=276^{*}$& PR   & \underline{\textbf{2.37}} & 2.77 & 7.69 & 9.06 \\
        \midrule
         Twitter2010& BFS  & \underline{\textbf{0.938}} & 1.35 & 3.13 & 7.67 \\
         8 machines& SSSP & \underline{\textbf{2.05}} & 3.63 & 3.36 & 7.94 \\
         $n=41.7$M& BC   & \underline{\textbf{3.04}} & 4.34 & 6.16 & 15.54 \\
         $m=1.47$B& CC   & \underline{\textbf{1.28}} & 7.65 & 7.40 & 21.73 \\
         diam $=23^{*}$& PR   & 10.46 & 12.72 & \underline{\textbf{9.85}} & 13.67 \\
        \midrule
         friendster& BFS  & \underline{\textbf{1.98}} & 3.12 & 3.70 & 12.38 \\
         8 machines& SSSP & \underline{\textbf{3.09}} & 4.42 & 4.25 & 27.23 \\
         $n=65.6$M& BC   & \underline{\textbf{5.29}} & 8.24 & 7.81 & 69.15 \\
         $m=1.80$B& CC   & \underline{\textbf{2.52}} & 17.92 & 9.92 & 56.25 \\
         diam $=32$& PR   & \underline{\textbf{18.32}} & 33.07 & 21.71 & 28.03 \\
        
        \midrule
         Hyperlink12& BFS  & \underline{\textbf{1.60}} & 6.22 & 5.59 & 20.50 \\
         16 machines& SSSP & \underline{\textbf{2.40}} & 8.02 & 28.53 & 33.10 \\
         $n=102$M& BC   & \underline{\textbf{3.03}} & 16.09 & 46.15 & 51.50 \\
         $m=0.93$B& CC   & \underline{\textbf{5.26}} & 11.71 & 29.42 & 48.31 \\
         diam $=95^{*}$& PR   & \underline{\textbf{7.12}} & 10.10 & 14.43 & 38.78 \\
         \midrule
        Road-USA & BFS  & \underline{\textbf{10.68}} & 357.73 & 1224.83 & 1431.11 \\
        16 machines & SSSP & \underline{\textbf{15.76}} & 362.11 & 1651.52 & 1788.53 \\
        $n=23.9$M & BC   & \underline{\textbf{26.24}} & 763.09 & {$>$35min} & {$>$35min} \\
        $m=28.9$M & CC   & \underline{\textbf{12.93}} & 383.51 & 1814.93 & {$>$35min} \\
        diam $=6139^{*}$ & PR   & 1.00 & \underline{\textbf{0.612}} & 3.01 & 3.59 \\
        \bottomrule
         \end{tabular}
         }
         \vspace{-1.5em}
\end{table}
\footnotetext[2]{An asterisk (*) denotes an estimated diameter obtained using approximation, following the common practice in Ligra~\cite{ShunB2013} and GBBS~\cite{dhulipala18scalable}.}
}

\subsection{End-to-End Results}
\label{subsec:e2e_results}

Tab.\ref{tab:end_to_end_results} presents a performance comparison between \ourgraph and our three state-of-the-art baselines, evaluated on various graph datasets.
Overall, \ourgraph consistently achieves superior performance, outperforming the baselines in \textbf{\textit{28 out of 30}} tested scenarios, most by substantial margins.

\textbf{\textit{\ourgraph}} achieves speedups more than $2\times$ in $19$ out of the $30$ tests we carry out.
These improvements arise from two key factors: \romenum{1} \ourgraph more effectively exploits sparsity through sparse-dense execution in \S\ref{subsec:overall_design_outline} and the techniques in \S\ref{subsec:implementation_technique_outline}; and \romenum{2} it reduces inter-machine communication while maintaining better load balance via periodic \ourname in \S\ref{subsec:overall_design_outline}.
These advantages become even more critical on large-diameter graphs such as \emph{Hyperlink-2012}, \emph{uk-2005} (web graph), and \emph{Road-USA} (road network), where larger number of stages makes per-stage load balance especially important and allows \ourgraph to achieve even greater performance gains.
\ourgraph underperforms the baselines in \textit{only} $2$ cases out of the $30$ experiments: \emph{Twitter PR} and \emph{Road-USA PR}.
Detailed explanations for these exceptions will be provided in \S\ref{subsec:ablation_study}.

We also compared \ourgraph against Pregel+~\cite{yan2015effective,lu2014large,yan2014pregel} and StarPlat~\cite{behera2024starplat}.
However, their implementations are at least $5\times$ slower than all the other baselines considered here and are therefore omitted from subsequent detailed analysis.

Fig.\ref{fig:perf_breakdown} illustrates the breakdown into computation, communication, and overhead for Twitter dataset on 16 machines.

\begin{figure}[t]
  \centering
  \includegraphics[width=0.6\columnwidth]{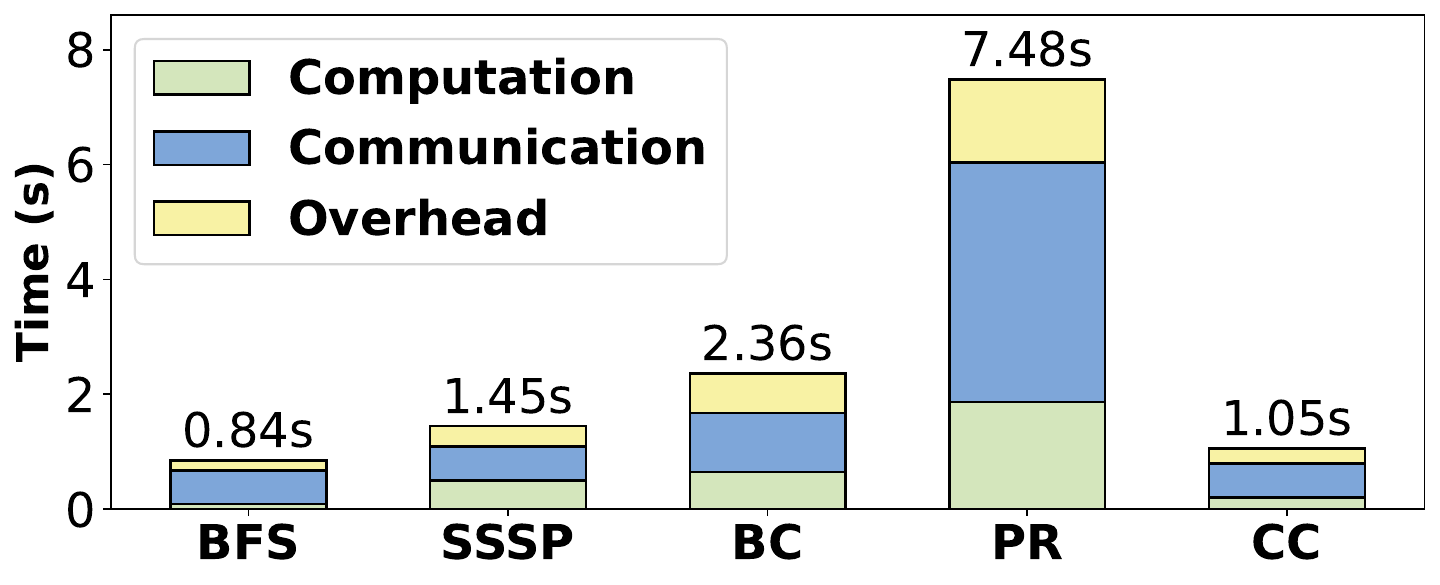}
  \vspace{-0.5em}
  \caption{Runtime breakdown on 16-machine Twitter.}
  \label{fig:perf_breakdown}
  \vspace{-1em}
\end{figure}

\subsection{Scalability}
\label{subsec:scalability}

\myparagraph{Weak Scaling}
We evaluate weak scaling on the graph systems by varying the number of machines while keeping the graph size per machine fixed.
To construct graphs on such a scale, we employ two graph generation methods: Erdős--Rényi (ER)~\cite{erd6s1960evolution}, which produces unskewed graphs, and Barabási--Albert (BA)~\cite{barabasi1999emergence}, which produces skewed graphs.
The average edge number per machine is fixed at 40M for fair comparison, where under 16-machine the graph is similar sizes as uk-2005.
For BA, we set the power-law exponent to $\gamma = 2.2$, consistent with the measured skew observed in natural graphs reported by PowerGraph~\cite{gonzalez2012powergraph}.
In Fig.\ref{fig:weakscale},
\ourgraph maintains nearly constant runtime across machine counts, showing excellent weak-scaling performance, whereas prior works incur significant scalability degradation.
See \suppref{sec:appendix_tdopg_weak_scaling} for results on other graph problems.

\myparagraph{Strong Scaling}
We also run strong scaling experiments using 1--16 machines.
We show in \suppref{sec:appendix_tdopg_strong_scaling} that \ourgraph exhibits good scalability across all configurations, while prior methods either show poor scalability when \#machines increases, or exhibit poor execution speed.

\begin{figure}[htbp]
  \centering
  \begin{subfigure}[t]{0.4\linewidth}
    \centering
    \includegraphics[width=\linewidth,trim=6pt 8pt 6pt 8pt,clip]{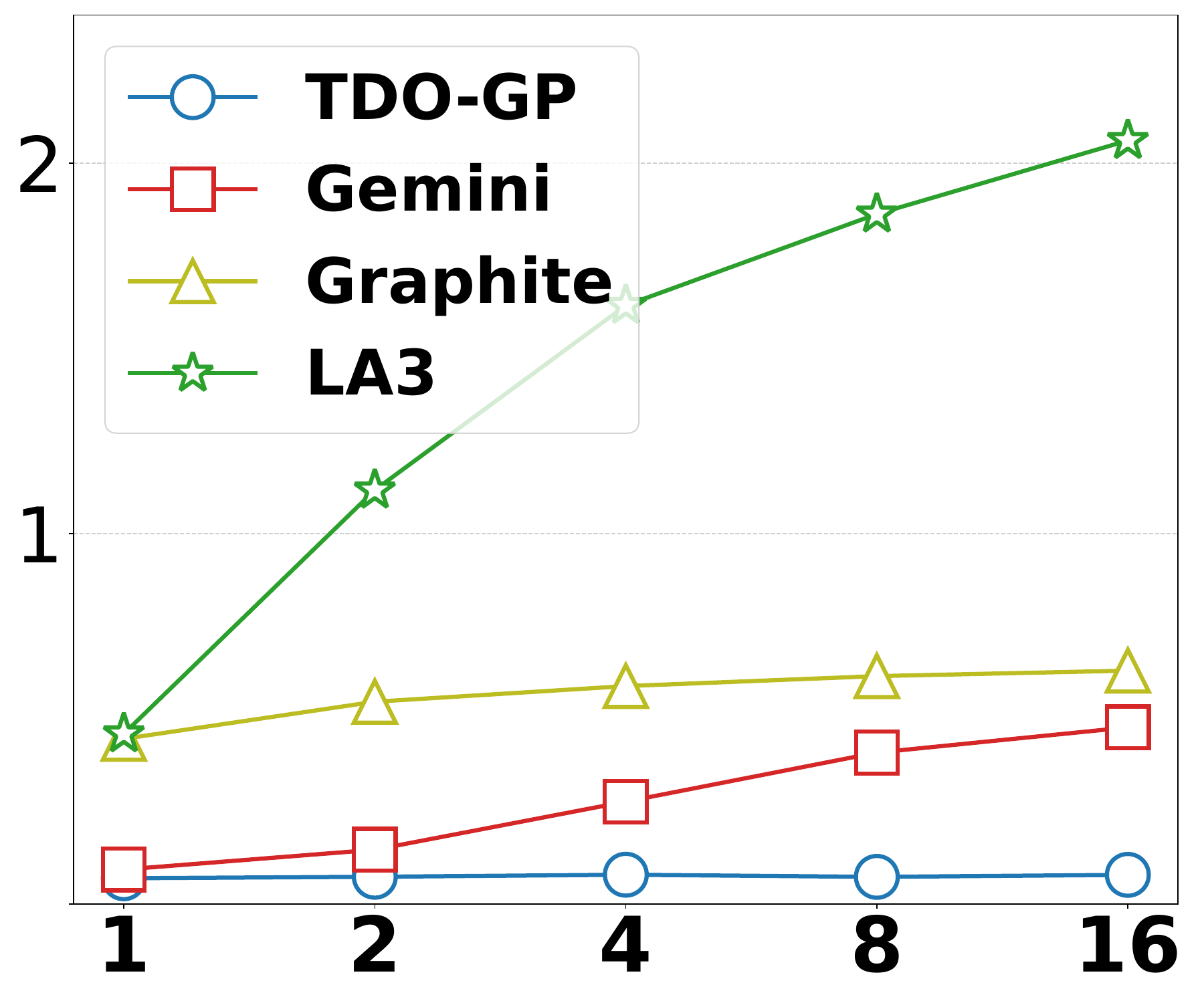}
    \caption{BC on ER Graph}
  \end{subfigure}
  \hspace{0.1em}
  \begin{subfigure}[t]{0.4\linewidth}
    \centering
    \includegraphics[width=\linewidth,trim=6pt 8pt 6pt 8pt,clip]{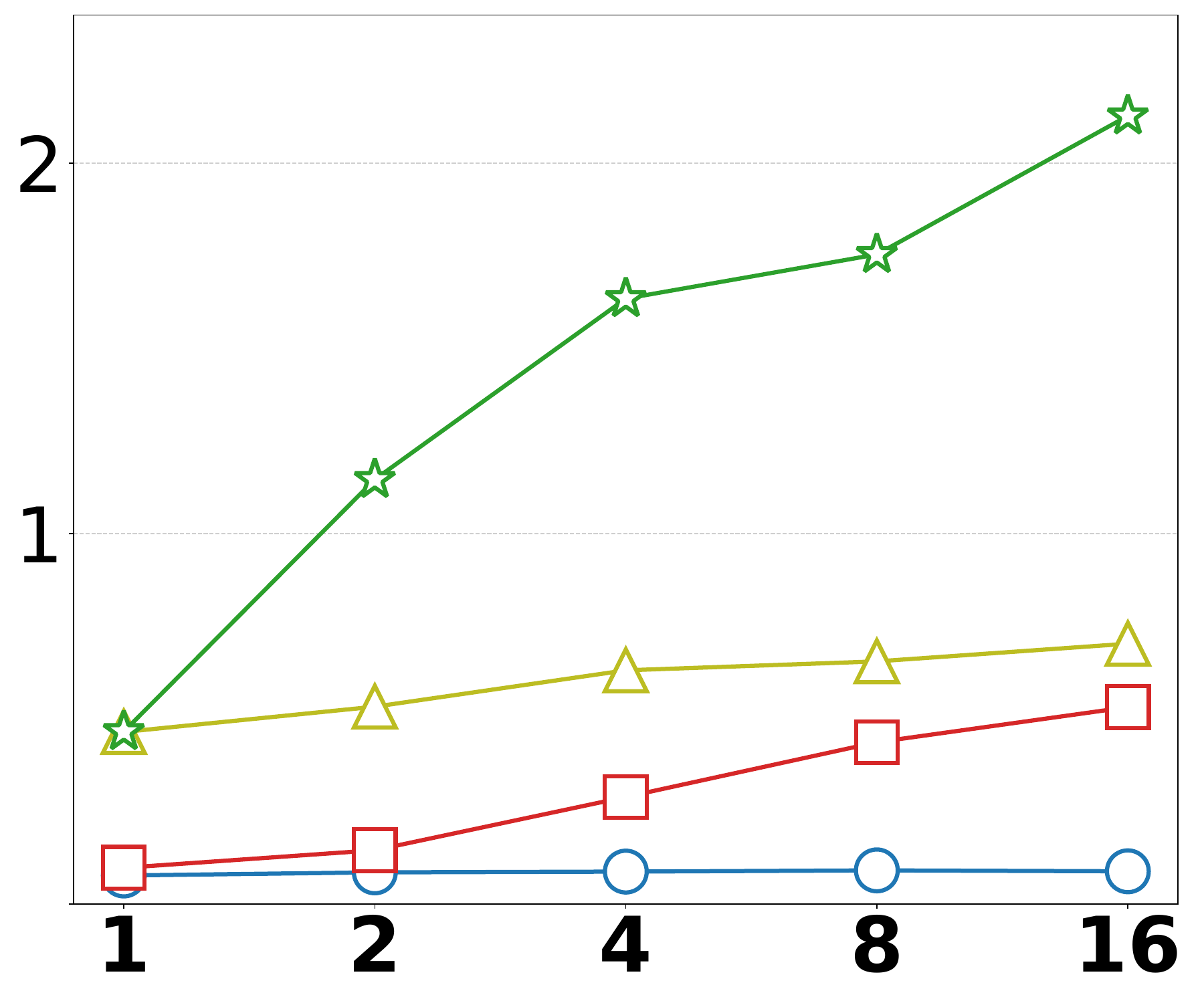}
    \caption{BC on BA Graph}
  \end{subfigure}
  \vspace{-0.2em}
  \caption{Weak scaling results. $y$-axis is runtime (in seconds) and $x$-axis is \#machines used. \ourgraph exhibits excellent weak-scaling performance and is significantly faster.}
  \vspace{-0.8em}
  \label{fig:weakscale}
\end{figure}

\subsection{Ablation Study}
\label{subsec:ablation_study}

\myparagraph{Sensitivity to \ourname}
We first show the effectiveness of \ourname within \ourgraph by comparing the final design with a Ligra plus direct-pull prototype (denoted as Ligra-Dist), which applies \S\ref{subsec:implementation_technique_outline} techniques but does not use \ourname.
Our BC results in Tab.\ref{tab:bc_twitter_comparison} on Twitter achieve up to $220\times$ speedup.

\ourgraph yields these gains through \textbf{\textit{improved load balance}}: in 16-machine Twitter-BC experiments, \ourgraph has a coefficient of variation of 0.061, compared with 0.321, 0.069, 0.331, 0.129 for Ligra-Dist, Gemini, Graphite, and LA3.

\begin{table}[h!]
\vspace{-0.5em}
\centering
\caption{Runtime (in seconds) of BC on Twitter shows that \ourname is a critical building block in \ourgraph.}
\label{tab:bc_twitter_comparison}
\vspace{-0.5em}
\footnotesize
\begin{tabular}{lcccc}
\toprule
\textbf{Number of Machines} & \textbf{1} & \textbf{4} & \textbf{8} & \textbf{16} \\
\midrule
Ligra-Dist (no \ourname) & 5.36 & 563.11 & 670.71 & 321.60 \\
\ourgraph   & \underline{\textbf{4.54}} & \underline{\textbf{3.70}} & \underline{\textbf{3.05}} & \underline{\textbf{2.35}} \\
\bottomrule
\end{tabular}
\vspace{-0.5em}
\end{table}

\myparagraph{Overhead of \ourname in \ourgraph}
Periodic \ourname execution is integrated into \ourgraph's graph initialization/update.
Our low-overhead implementation reduces this cost by up to $2.3\times$ compared with the metadata and execution-plan initialization/adjustment overheads of Gemini, Graphite and LA3.
See \suppref{sec:appendix_tdogp_init} for full details.

\myparagraph{Sensitivity to Graph-Specific Optimization}
We evaluate the sensitivity of \ourgraph to techniques in \S\ref{subsec:implementation_technique_outline}.
We measure the slowdown when removing each optimization family from the fully optimized \ourgraph.
Tab.\ref{tab:ablation_speedup} reports Twitter results with 16 machines, showing that disabling any single optimization family causes substantial performance degradation, with slowdowns of up to $4.09\times$, illustrating the importance of each technique in achieving high practical efficiency.
Similar trends appear for 2--15 machines and are omitted.
These graph-specific optimizations enable \ourgraph to outperform Gemini, which achieves moderate load balance.

\begin{table}[h!]
\centering
\footnotesize
\caption{Slowdown when removing graph-specific techniques (see \S\ref{subsec:implementation_technique_outline}) from the fully optimized \ourgraph.}
\label{tab:ablation_speedup}
\vspace{-0.5em}
\begin{tabular}{lccc}
\toprule
\textbf{Graph Problem} & \textbf{BFS} & \textbf{BC} & \textbf{PR} \\
\midrule 
Without (T1) & 1.89$\times$ & 1.79$\times$ & 1.67$\times$ \\
Without (T2) & 1.91$\times$ & 4.09$\times$ & 1.96$\times$ \\
Without (T3) & 1.26$\times$ & 1.33$\times$ & 2.50$\times$ \\
\bottomrule
\end{tabular}
\vspace{-0.5em}
\end{table}

\myparagraph{PageRank and NUMA Effects}
In Tab.\ref{tab:end_to_end_results}, the only 2 out of the 30 cases where \ourgraph underperforms prior systems are both PageRank (PR), and only against one prior baseline.
\ourgraph is built upon ParlayLib~\cite{blelloch2020parlaylib}, a fast parallel library with good scalability, but is NUMA-oblivious.
Each machine in our cluster contains four NUMA nodes connected in a square topology, where certain NUMA-to-NUMA accesses are significantly slower, hindering our PR performance.

In two ablations in \suppref{subsec:numa}, we show that (i) if the NUMA bottleneck is removed by using a machine with all-to-all NUMA connectivity, \ourgraph outperforms all baselines on PR by up to $2.3\times$ and (ii) even when restricted to a single machine, \ourgraph surpasses the optimized single-machine GBBS~\cite{dhulipala18scalable} by $1.5\times$.

\section{Conclusion}
We introduce the \textit{\taskdata orchestration} abstraction, unifying a variety of distributed applications.
We design \textit{\ourname} for this interface with low-overhead load balance.
Upon \ourname, we develop \textit{\ourgraph}, a distributed graph system. 
Our designs are theoretically grounded and practically efficient.

\bibliographystyle{ACM-Reference-Format}
\balance
\bibliography{ref}

\clearpage

\appendix

\renewcommand{\topfraction}{0.95}
\renewcommand{\bottomfraction}{0.95}
\renewcommand{\textfraction}{0.05}
\renewcommand{\floatpagefraction}{0.3}
\renewcommand{\dbltopfraction}{0.95}
\renewcommand{\dblfloatpagefraction}{0.4}
\setcounter{topnumber}{3}
\setcounter{bottomnumber}{2}
\setcounter{totalnumber}{4}
\setcounter{dbltopnumber}{2}

\section{Extended Related Work: Task- and Data-Centric Distributed Frameworks}
\label{sec:appendix_related_frameworks}

A large body of distributed programming frameworks expose task- and/or data-centric abstractions.
We survey the most prominent ones here and contrast them with \ourname.
The common thread is that these frameworks provide \emph{general} mechanisms for expressing distributed tasks and data, but either leave task--data co-location, communication aggregation, and skew-aware load balancing to the user, or do not target them at all.
\ourname instead targets this recurring pattern directly, providing a simpler task--data interface (\S\ref{subsec:abstraction_interface}) with automatic, load-balanced communication.

\myparagraph{General task-parallel and dataflow engines}
Systems such as Ray~\cite{moritz2018ray}, Dask~\cite{rocklin2015dask}, Spark~\cite{zaharia}, MapReduce~\cite{dean2008mapreduce}, Dryad~\cite{isard2007dryad} and DryadLINQ~\cite{yu2008dryadlinq}, CIEL~\cite{murray2011ciel}, Naiad~\cite{murray2013naiad}, Flink~\cite{carbone2015flink}, TensorFlow~\cite{abadi2016tensorflow}, parameter servers~\cite{li2014parameter}, Parsl~\cite{babuji2019parsl}, Swift/T~\cite{wozniak2013swift}, T-thinker~\cite{yan2019thinker}, and dependency-guided stream processing~\cite{kallas2022stream} provide general task scheduling over distributed data, and systems such as SCache~\cite{fu2018scache} accelerate the shuffle traffic they generate.
However, they do not directly optimize task--data co-location: users must still manage data partitioning, communication, and aggregation to avoid hotspots.

\myparagraph{Task-based HPC runtimes with explicit data placement}
Legion~\cite{bauer2012legion,bauer2021control,bauer2023visibility} and its language Regent~\cite{slaughter2015regent} and runtime Realm~\cite{treichler2014realm}, along with PaRSEC~\cite{bosilca2013parsec}, StarPU~\cite{augonnet2011starpu}, HPX~\cite{kaiser2014hpx}, Charm++~\cite{kale1993charm}, OmpSs~\cite{duran2011ompss}, Cilk~\cite{Cilk95,schardl2023opencilk}, Kokkos~\cite{edwards2014kokkos}, RAJA~\cite{beckingsale2019raja}, Pure~\cite{psota2024pure}, and Sequoia~\cite{fatahalian2006sequoia} provide richer data-placement and task-dependency control.
But they typically require users to express lower-level region/partition mappings or explicit placement and dependency annotations, whereas \ourname requires only a batch of lambda-task contexts that name their target data.
Individual mechanisms in this line target our objectives in isolation, such as distributed work stealing for dynamic load balance~\cite{parikh2016matchmaking} and load-balance-aware data placement~\cite{xie2023merchandiser}, but leave the task--data interface to the programmer.

\myparagraph{Partitioned global address space (PGAS) languages}
PGAS languages and libraries including Chapel~\cite{chamberlain2007chapel}, X10~\cite{charles2005x10}, UPC++~\cite{zheng2014upcxx}, Titanium~\cite{yelick1998titanium}, Co-array Fortran~\cite{numrich1998coarray}, Global Arrays~\cite{nieplocha2006global}, and OpenSHMEM~\cite{chapman2010openshmem} give programmers explicit control over data distribution and locality.
This again shifts partitioning and communication-aggregation decisions to the user, rather than providing automatic load-balanced communication as \ourname does.

\myparagraph{Data-centric dataflow optimization}
Data-centric frameworks such as DaCe and Data-Centric Python~\cite{bennun2019dace,ziogas2021productivity}, together with Legate NumPy~\cite{bauer2019legate}, Halide and its distributed extension~\cite{ragankelley2013halide,denniston2016halide}, and TVM~\cite{chen2018tvm}, optimize data movement via dataflow transformations.
However, they do not target runtime skew-aware load balancing nor hotspot communication reduction as \ourname does.

\myparagraph{Summary}
The frameworks most closely related to \ourname are the general task/data-centric systems above---in particular Ray, Dask, Legion, and Data-Centric Python.
Relative to all of them, \ourname \romenum{1} targets task--data co-location and skew-aware, hotspot-avoiding load balance directly rather than delegating it to the user, and \romenum{2} exposes a substantially simpler interface: the user supplies only lambda-tasks and the addresses of their target data (\S\ref{subsec:abstraction_interface}), and the runtime automatically handles partitioning, load-balanced communication, and aggregation.

\section{Extended Related Work: Models for Distributed Computing}
\label{sec:model}

In this paper, we model and analyze algorithms using the classic 
Bulk-Synchronous Parallel (BSP) model~\cite{valiant1990bridging,valiant1990general,valiant1988optimally}.
The BSP consists of $P$ machines, each with a sequential processing element (PE) and a local memory.
There is no shared global memory.
All machines are connected by a point-to-point communication network.
All remote accesses to data on other machines are carried out via messages sent over this network.
The system operates in bulk-synchronous rounds (\textit{supersteps}), where periodic barrier synchronizations separate periods of asynchronous computation and communication.

\myparagraph{Cost Metrics}
The BSP model defines the notion of an $h$-relation of a superstep, which combines the maximum computation work, maximum communication amount and synchronization time over all machines.
Due to using \textit{maximum} when defining $h$-relations, load balance is critical for achieving good performance.
In all our analysis, for clarity, we decouple the \textbf{\textit{computation time}} and \textbf{\textit{communication time}}, and do not consider the BSP parameters $g$ and $L$ in~\cite{valiant1990bridging,gerbessiotis1994direct}.
Computation time $t$ and communication time $h$ can be combined into a formal BSP result $\Theta(gh+t)$ as long as the synchronization overhead can be hidden by system communication, i.e., $(gh+t)\in\Omega(L)$.

\myparagraph{Other Distributed Models}
The Massively Parallel Model of Computation (MPC)~\cite{im2023massively,karloff2010model,goodrich2011sorting,beame2013communication,andoni2014parallel} family of models regards synchronization as the main bottleneck in distributed systems and takes \textit{communication rounds} as the only optimization target.
We choose the the BSP model because it differs from MPC in two ways:
\romenum{1} The the BSP model accounts for the total communication among machines rather than ignoring the amount of communication in each round. 
\romenum{2} The the BSP model also considers computation costs neglected by the MPC model, which turn out to be critical for real-world system designs on emerging distributed hardware~\cite{kang2022pimtreevldb,bindschaedler2020hailstorm} other than MapReduce systems ~\cite{dean2008mapreduce}.

LOCAL/CONGEST models~\cite{linial1987distributive,linial1992locality} are a popular model family for distributed systems.
Like the models we consider, each machine has only a local view of the computation.
These models consider network topology, which we ignore in the BSP model. 
However, they have similar weaknesses to the MPC model with regards to their metrics: (i) each machine can send/receive up to its maximum allowed communication at each round, whereas we seek to minimize communication, and (ii) computation costs are not considered.

The Processing-in-Memory (PIM) model~\cite{kang2021processing} is a recently proposed model for emerging compute-in-memory hardware~\cite{mutlu2023primer}.
The model has a powerful Host CPU and $P$ PIM nodes, each comprised of a PE and a memory.
It operates in bulk-synchronous rounds, but unlike the BSP model, all communication must be sent by or to the Host CPU.
Furthermore, the model decouples the computation time of the CPU from other processors, resulting in a lack of analysis of load-balance between them.

\section{Proof for \Cref{theorem:main}}
\label{appendix:main_proof}
We will use the balls-into-bins game of \Cref{lemma:weighted_balls_into_bins} to prove load balance.

\begin{lemma}[Weighted Balls into Bins ~\cite{sanders1996competitive,nievergelt1973binary}]
\label{lemma:weighted_balls_into_bins}
Placing weighted balls with total weight $W=\Sigma w_i$ and weight limit $W/(P\log P)$ into $P$ bins uniformly randomly yields $\Theta(W/P)$ weight in each bin \whp.
\end{lemma}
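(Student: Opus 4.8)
The plan is to fix one bin, show that its load concentrates around the mean $W/P$ via a multiplicative Chernoff bound for sums of independent bounded nonnegative random variables, and then union-bound over all $P$ bins. First I would set up the randomness: for a fixed bin $j$, let $X_{ij}\in\{0,1\}$ indicate that ball $i$ (of weight $w_i$) lands in bin $j$, so $\Pr[X_{ij}=1]=1/P$ and the $X_{ij}$ are mutually independent across $i$, since distinct balls are placed independently. The load of bin $j$ is then $S_j=\sum_i w_i X_{ij}$, a sum of independent terms $Y_i:=w_i X_{ij}$ each lying in $[0,w_i]\subseteq[0,L]$ with $L:=W/(P\log P)$, and $\mathbb{E}[S_j]=\sum_i w_i/P=W/P$.

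The crucial step is to exploit the weight cap $L$. Normalizing by $L$, the variables $Z_i:=Y_i/L\in[0,1]$ are independent with normalized mean $\mu':=\mathbb{E}[S_j]/L=(W/P)/(W/(P\log P))=\log P$; that is, the cap guarantees that in units of the maximum possible per-ball contribution each bin expects $\Theta(\log P)$ \emph{effective units} of load, which is exactly the quantity that drives Chernoff concentration. Applying the multiplicative Chernoff bound to $\sum_i Z_i=S_j/L$ gives, for a deviation parameter $\delta$,
\[
\Pr\!\left[\,\bigl|S_j-\tfrac{W}{P}\bigr|\ge \delta\,\tfrac{W}{P}\,\right]\le 2\exp\!\left(-\tfrac{\delta^2\mu'}{3}\right)=2\exp\!\left(-\tfrac{\delta^2}{3}\log P\right),
\]
which is $1/\mathrm{poly}(P)$. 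Choosing a constant $\delta$ large enough (the upper tail tolerates $\delta>1$, while the lower tail uses the sharper exponent $\delta^2\mu'/2$) makes the per-bin failure probability smaller than the reciprocal of a fixed polynomial in $P$; a union bound over the $P$ bins then shows that every bin simultaneously holds load in $[(1-\delta)\tfrac{W}{P},(1+\delta)\tfrac{W}{P}]=\Theta(W/P)$ \whp.

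The main obstacle I anticipate is the lower tail. The upper tail is comfortable, since the multiplicative bound admits arbitrarily large $\delta$, so an $O(W/P)$ upper bound is immediate; but the matching $\Omega(W/P)$ bound forces $\delta<1$, leaving an exponent of only $\Theta(\delta^2\log P)$, so a literal cap of $W/(P\log P)$ buys just a fixed-degree polynomial concentration. To obtain the strong \whp guarantee relied upon elsewhere, I would note that the applications actually enforce a tighter cap of the form $W/(P\,\mathrm{polylog}\,P)$ (consistent with the hypotheses $n=\Omega(P\log^{3}P/\log\log P)$ and $n=\Omega(P\log^{2}P)$), which raises $\mu'$ to $\mathrm{polylog}(P)$ and drives the tail to $\exp(-\Omega(\mathrm{polylog}\,P))$, i.e.\ super-polynomially small. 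Two routine checks then remain: confirming the independence required by the Chernoff bound (immediate from independent placements), and verifying that the cap $w_i\le L$ is precisely what certifies $Z_i\in[0,1]$, so the bounded-variable form of the bound applies with no further assumption on the weight distribution.
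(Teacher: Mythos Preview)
The paper does not prove this lemma at all: it is stated in the appendix with citations to \cite{sanders1996competitive,nievergelt1973binary} and then invoked as a black box inside the proof of Theorem~\ref{theorem:main}. There is therefore no ``paper's own proof'' to compare against.

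Your argument is the standard one for weighted balls-into-bins and is correct for the upper tail: normalize by the cap $L=W/(P\log P)$ so the per-ball contributions lie in $[0,1]$, observe that the normalized mean is $\mu'=\log P$, apply multiplicative Chernoff, and union-bound over the $P$ bins. Because the upper-tail constant $\delta$ can be taken arbitrarily large, this yields $O(W/P)$ per bin with failure probability $P^{-c}$ for any desired $c$, which is exactly what the paper needs when it says ``the communication is thus load-balanced.''

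Your lower-tail caveat is on point and worth keeping explicit. With only $\mu'=\log P$ and $\delta<1$, the per-bin tail is $P^{-\delta^2/2}$, which does not survive a union bound over $P$ bins; indeed, the extremal instance of $P\log P$ equal-weight balls into $P$ bins has $\Theta(1)$ expected empty bins, so a super-polynomially-small failure probability is impossible at exactly this cap. Your resolution---observing that the actual invocations in the proof of Theorem~\ref{theorem:main} sit under hypotheses like $n=\Omega(P\log^{3}P/\log\log P)$, which effectively tighten the cap and push $\mu'$ to a larger polylog---matches how the paper deploys the lemma, and the looser reading of \whp\ that allows the $\Theta$-constant to depend on $c$ is also consistent with the paper's own \whp\ convention.
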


\begin{proof}[Proof for \Cref{theorem:main}]
    When $n=\Omega(P^\epsilon)$ for some $\epsilon>1$, things become trivial by setting $F=\Theta(n/P)$.
    Here we only analyze the hard case when $n=O(P\log^\epsilon{P})$ for some $\epsilon>1$.
    We set $C=\Theta(B/\sigma)$ and $F=\Theta(\log{P}/\log{\log{P}})$.
    \romenum{1} Phase 1 takes at most $O(\log_{F}{P})$ rounds and in each round at most $O(n)$ words of messages are sent.
    Since meta-task sizes are bounded by $O(C\log_{C}{n})\leq{O}\left(\frac{n}{P\log{P}}\right)$, the weight limit in \Cref{lemma:weighted_balls_into_bins} is met and the communication is thus load-balanced.
    When $n=\Omega\left(\frac{PB\log^{3}{P}}{\log\log{P}}\right)$, the communication time is $O(\frac{n}{P}\log_F{P})$ $=O(\frac{n}{P}\log_{\frac{n}{P}}{P})$ in phase 1.
    Phase 2 takes less communication time than phase 1 because it only requires partial communication over phase 1's trace, so \romenum{1} is proved.
    \romenum{2} Since meta-tasks have size $O(C\log_{C}{n})$, \Cref{lemma:weighted_balls_into_bins} can also be applied.
\end{proof}

Moreover, each machine holds $\Theta(n/P)$ tasks when a stage ends. 
Thus, for iterative applications, where each executed task in a stage dictates a task to be executed in the next stage, the initial condition of \Cref{theorem:main} that each machine holds $\Theta(n/P)$ tasks when a stage begins, is inductively met.
This implies that \Cref{theorem:main} can again be applied to the next stage.
Note that such iterative applications include all the supported applications listed in \S\ref{subsec:abstraction_application}.

\section{Additional Ablations for \ourname in the \KV Store}
\label{sec:appendix_kv_ablation}
Complementing the \KV store evaluation in \S\ref{sec:key_value_case_study}, we present three additional ablations of \ourname, all using the same \KV/YCSB setup as Fig.\ref{fig:gamma_linecharts}: a runtime component breakdown, a memory-overhead study, and a tree-parameter sensitivity study.

\subsection{Component Time Breakdown}
\label{sec:appendix_component_breakdown}
We evaluate \ourname on \KV store (\S\ref{sec:key_value_case_study}) and decompose its overhead into three components:
\romenum{1} \textit{task push}: task sorting, meta-task construction and merging, and tree-structured aggregation communication;
\romenum{2} \textit{data pull}; and
\romenum{3} \textit{writeback propagation}.
Tab.\ref{tab:component_breakdown} reports a representative breakdown at $16$ machines with $2$M tasks per machine (averaged over workloads A/C/LOAD and $\gamma{=}1.5$/$2.5$).
The task-push phase incurs certain overhead, whereas the orchestrated execution scheme makes the subsequent data pull and writeback highly efficient.
Although orchestration introduces overhead, this cost enables significant performance improvements and avoids the contention-induced performance degradation observed in other baselines, which leads to the respective averaged speed-ups of $1.42\times$--$2.83\times$ as shown in \S\ref{sec:key_value_case_study}.

\begin{table}[h!]
\centering
\footnotesize
\caption{\ourname tree-mode component breakdown on \KV ($16$ machines, $2$M tasks/machine).}
\label{tab:component_breakdown}
\vspace{-0.5em}
\begin{tabular}{lc}
\toprule
\textbf{Component} & \textbf{Share of runtime} \\
\midrule
Task push & $60.1\%$ \\
Data pull and Execution & $34.8\%$ \\
Writeback propagation & $5.1\%$ \\

\bottomrule
\end{tabular}
\vspace{-0.5em}
\end{table}

\subsection{Memory Overhead}
\label{sec:appendix_memory_overhead}
We profile the peak memory usage of \ourname against the Direct Push and Direct Pull baselines on the \KV store using $16$ machines with $2$M tasks per machine.
\ourname maintains additional meta-task metadata to enable contention migration, incurring a modest peak-memory overhead over the most memory-efficient baseline, Direct Pull: $9.7\%$ on average, increasing to $15.5\%$ under the highest skew ($\gamma{=}1.5$) and decreasing to $4.0\%$ at $\gamma{=}2.5$.
Direct Push remains within $\pm2\%$ of Direct Pull.
In absolute terms, the tree mode peaks at $2.06$~GB per machine, compared with $1.8$--$2.0$~GB for Direct Pull.
Thus, the metadata used by \ourname to achieve load balance incurs only a modest $4.0\%$--$15.5\%$ memory overhead, which is well justified by its significant runtime gains.

\subsection{Hyperparameter Sensitivity}
\label{sec:appendix_param_sensitivity}
\ourname uses two hyperparameters---the communication-forest fanout $F$ (\S\ref{subsubsec:comm_forest}) and the meta-task-set capacity $C$ (\S\ref{subsubsec:meta_task}).
We sweep $F\in\{2,4,8\}$ and $C\in\{2,3,4,8,16\}$ on the \KV store using $16$ machines, while keeping the remaining Fig.\ref{fig:gamma_linecharts} setup fixed.
Tab.\ref{tab:param_sensitivity} reports the geometric-mean runtime of each stable configuration, normalized to the $F{=}2,C{=}3$ reference.
When $F$ and $C$ remain within the same asymptotic regime defined in \S\ref{sec:main_design}, runtime stays within a narrow $0.95\times$--$1.17\times$ range, varying by less than $20\%$ across all stable configurations.
This stability indicates that parameter choices satisfying the same asymptotic conditions yield consistently strong practical performance, highlighting the value of the theoretical guidance established in \S\ref{subsec:orch_formal_guarantee}.
The best setting is $F{=}4,C{=}16$ ($0.951\times$; absolute mean $0.353$s).
Only extreme configurations with a very small capacity relative to the fanout regress or become unstable under the highest skew, which our default parameterization avoids.

\begin{table}[h!]
\centering
\footnotesize
\caption{Sensitivity of \ourname to the communication-forest fanout $F$ and meta-task-set capacity $C$ on \KV store ($16$ machines). Each cell is the geometric-mean runtime normalized to the $F{=}2,C{=}3$ reference (lower is better; best in \textbf{bold}). Across all measured settings runtime stays within $0.95\times$--$1.17\times$ (less than $20\%$ spread); ``--'' marks settings outside the reported stable sweep.}
\label{tab:param_sensitivity}
\vspace{-0.5em}
\begin{tabular}{c ccccc}
\toprule
& \multicolumn{5}{c}{\textbf{Meta-task-set capacity $C$}} \\
\cmidrule(lr){2-6}
\textbf{Fanout $F$} & $2$ & $3$ & $4$ & $8$ & $16$ \\
\midrule
$2$ & -- & $1.000$ & -- & $1.169$ & $0.982$ \\
$4$ & $1.006$ & -- & -- & $0.954$ & $\mathbf{0.951}$ \\
$8$ & $0.966$ & $0.958$ & $1.011$ & $1.018$ & $0.954$ \\
\bottomrule
\end{tabular}
\vspace{-0.5em}
\end{table}

\section{Details for Graph-Specific Implementation Techniques}
\label{appendix:implementation_techniques}

We present here the implementation techniques used in our graph processing system, which efficiently exploit the data layout and execution flow provided by \ourname, leading to the strong performance demonstrated in \S\ref{sec:eval}.

\subsection{Optimized Global Communication}
\label{appendix_subsubsec:global_comm}

As we will show in \S\ref{subsec:ablation_study}, the critical bottleneck in a distributed graph system is the inter-machine communication time.
Optimizing global communication is therefore essential.
First, we apply deduplication during message passing along the meta-task trees, aggressively merging all repeated items.
Second, in dense mode, we reduce broadcast traffic by sending the values of each vertex only to the machines that store its corresponding edges.
The mapping of vertices to machines is established during the preprocessing phase.

\subsection{Work-Efficient Local Computation}
\label{appendix_subsubsec:local_comp}

For all local computations in our implementation, we employ work-efficient designs that are both theoretically optimal and practically fast.
Our design leverages ParlayLib~\cite{blelloch2020parlaylib}, a state-of-the-art C++ parallel library providing a wide range of work-efficient parallel algorithms as primitives.

The \DISTVERTEXSUBSET structure consists of a collection of \VERTEXSUBSET instances, one per machine, where each \VERTEXSUBSET can independently switch between two representation formats introduced in Ligra~\cite{ShunB2013}.
We introduce two implementation improvements.
\romenum{1} In the sparse representation, we replace their array of vertices with a phase-concurrent hash table with adaptive sizing, enabling more efficient lookups within \VERTEXSUBSET.
\romenum{2} In the dense representation, we replace the original parallel C++ Boolean-map with a concurrent bitmap, improving cache efficiency.

For local edges whose source and destination vertices reside on the same machine, we bypass the global orchestration framework and instead execute a local \EDGEMAP.
We implement a lightweight \EDGEMAP entirely using ParlayLib primitives, which, as we show in \S\ref{subsec:numa}, is more efficient than the ones in Ligra/GBBS for \EDGEMAP workloads.

Additionally, we allow users to provide an optional input function \textit{filter\_dst} to \DISTEDGEMAP to accelerate execution by filtering out vertices that are guaranteed not to be destination vertices in the current round.
Disabling \textit{filter\_dst} does not affect correctness but may reduce efficiency.

\subsection{Aligned Coordination between Components}
\label{appendix_subsubsec:aligned_coord}

A distributed graph processing system involves multiple software components running across diverse hardware resources.
Ensuring a coordinated execution schema among these components is crucial.
Our implementation optimizes this coordination to achieve aligned and efficient execution.

For better alignment with the parallel thread scheduler in ParlayLib, we avoid using nested \textsc{ParallelFor}, which is commonly seen in graph algorithms but inefficient under such schedulers.
Instead, we adopt \textsc{Flattern}, assisted by \textsc{delayed} structures~\cite{blelloch2020parlaylib}, to transform two-dimensional nested parallelism into one-dimensional parallelism.

To better align local execution across threads and minimize cache thrashing, we design concurrent counters and other shared concurrent structures with sizes padded to a multiple of the cache-line size.
This padding prevents false sharing, ensuring that accesses from different cores do not contend on the same cache line.

We also align local computation with the load-balanced data layout across machines.
We initialize the graph with a vertex-value layout such that the total number of outgoing edges (out-degrees) assigned to each machine is approximately equal.
This ensures that most local computations are naturally load-balanced across machines, since each handles a comparable number of edges.

\section{Additional Scalability Evaluation for \ourgraph in Graph Processing}
\label{sec:appendix_graph_scalability}

\subsection{Weak-Scaling Scalability of \ourgraph}
\label{sec:appendix_tdopg_weak_scaling}

In addition to the BS results presented in \S\ref{subsec:scalability}, we show the weak-scaling results for PR in Fig.\ref{fig:weakscale_appendix}.
Similarly, \ourgraph again maintains a nearly constant runtime as the machine count increases, demonstrating excellent weak scalability, while prior works suffer substantial performance degradation.

\begin{figure}[htbp]
  \centering
  \begin{subfigure}[t]{0.4\linewidth}
    \centering
    \includegraphics[width=\linewidth,trim=6pt 8pt 6pt 8pt,clip]{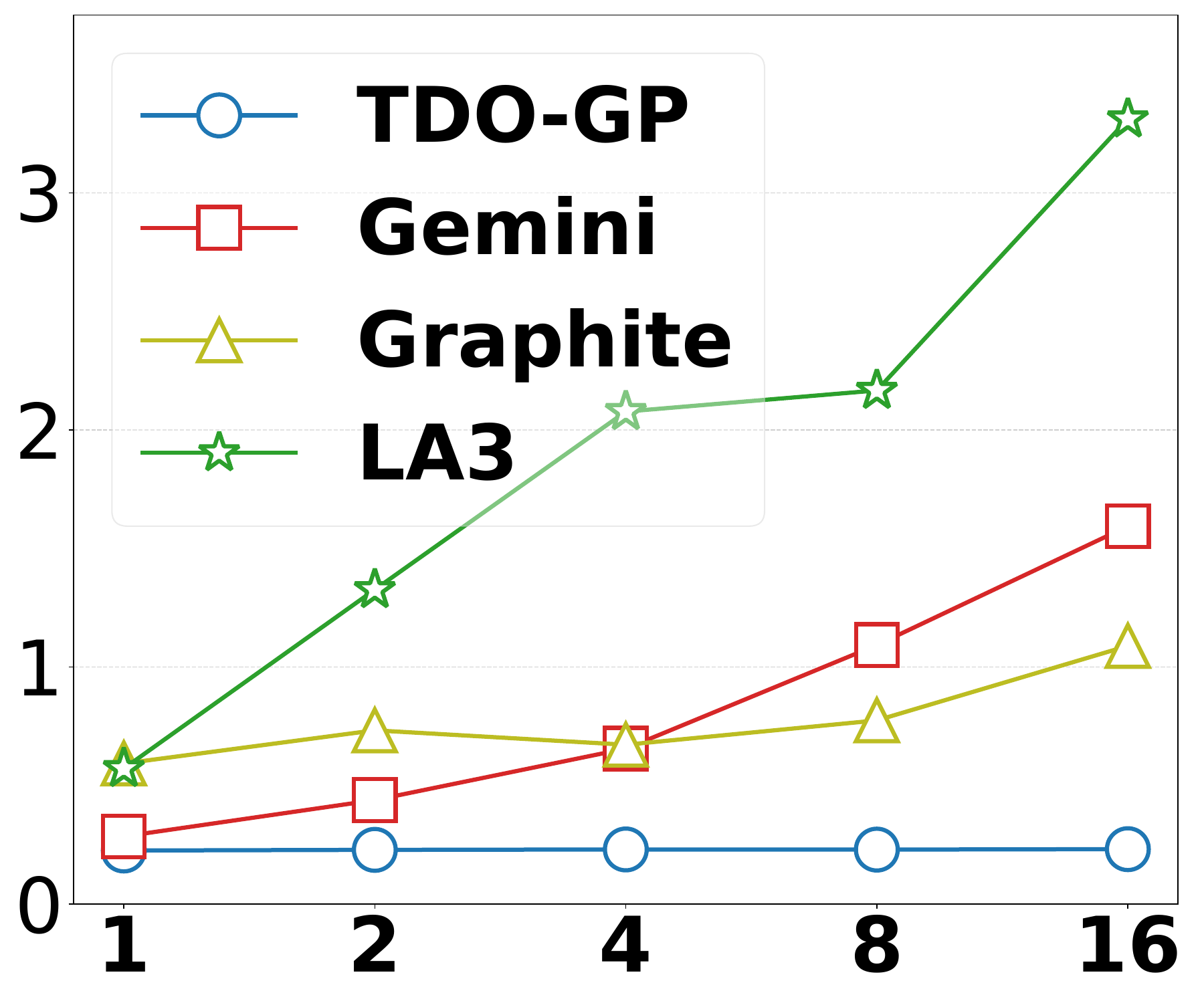}
    \caption{PR on ER Graph}
  \end{subfigure}
  \hspace{0.1em}
  \begin{subfigure}[t]{0.4\linewidth}
    \centering
    \includegraphics[width=\linewidth,trim=6pt 8pt 6pt 8pt,clip]{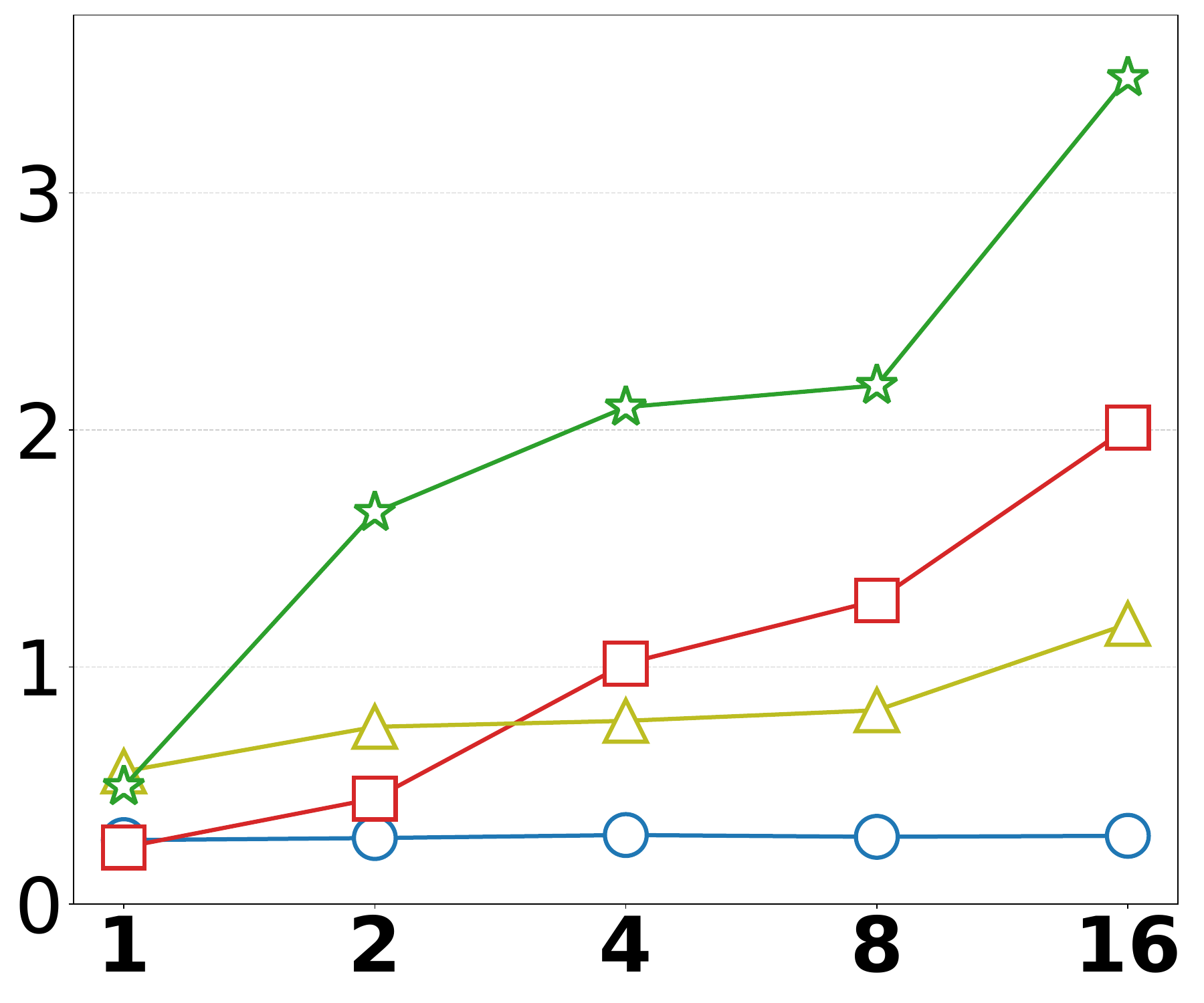}
    \caption{PR on BA Graph}
  \end{subfigure}
  \caption{Weak scaling results. $y$-axis is runtime (in seconds) and $x$-axis is \#machines used. \ourgraph exhibits excellent weak-scaling performance and is significantly faster.}
  \vspace{-0.8em}
  \label{fig:weakscale_appendix}
\end{figure}

\subsection{Strong-Scaling Scalability of \ourgraph}
\label{sec:appendix_tdopg_strong_scaling}
We run the SSSP and BC algorithms on the Twitter dataset using 1--16 machines.
Fig.\ref{fig:twitter_strong_scale_right} shows that \ourgraph exhibits good scalability across all configurations, while other methods either show poor scalability when increasing the number of machines, or exhibit poor performance.
\begin{figure}[ht]
  \centering
  \subcaptionbox{SSSP on Twitter.\label{fig:sssp-twitter}}[
    0.48\linewidth]{\includegraphics[width=\linewidth]{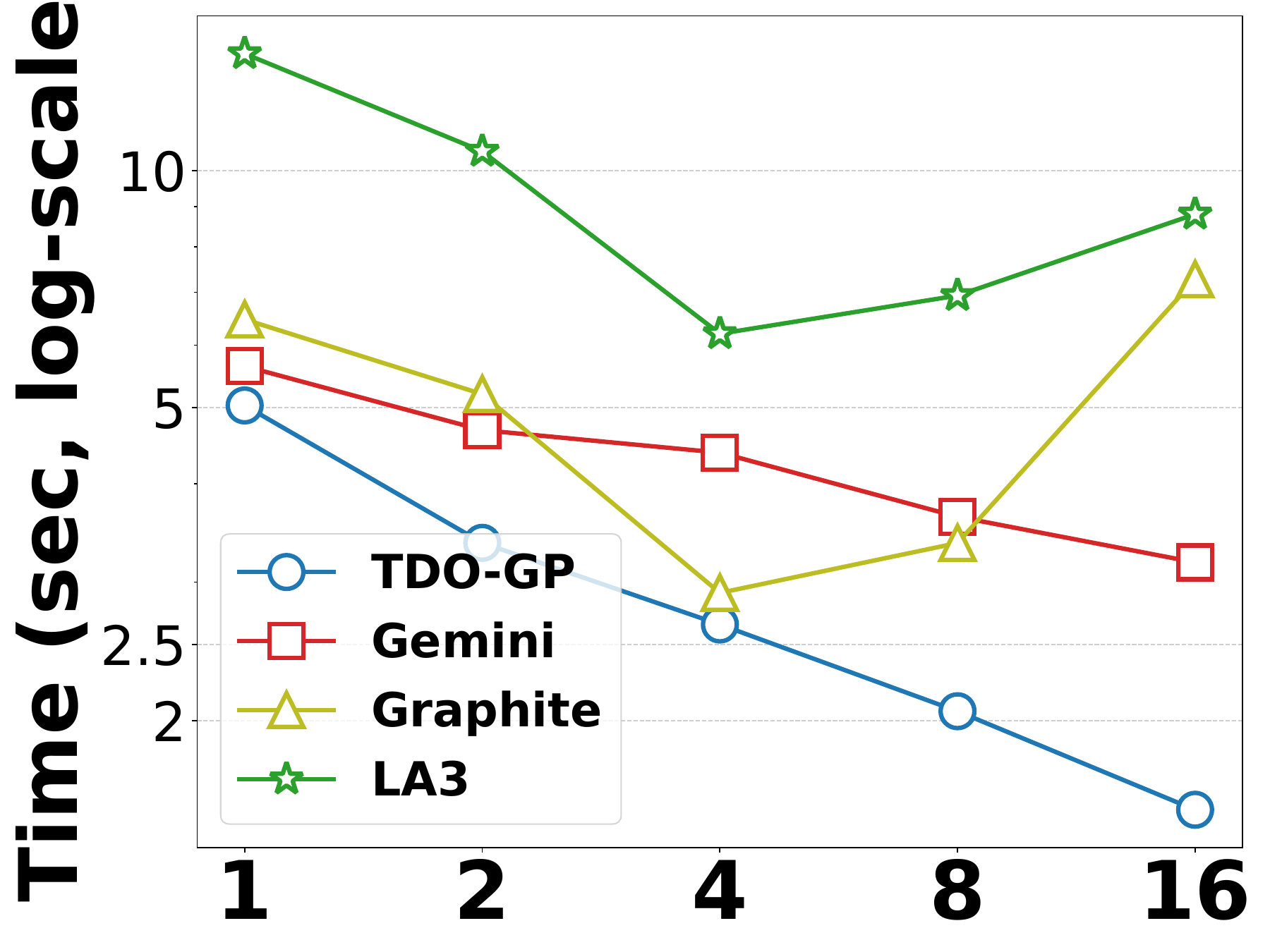}}
  \hfill
  \subcaptionbox{BC on Twitter.\label{fig:bc-twitter}}[
    0.48\linewidth]{\includegraphics[width=\linewidth]{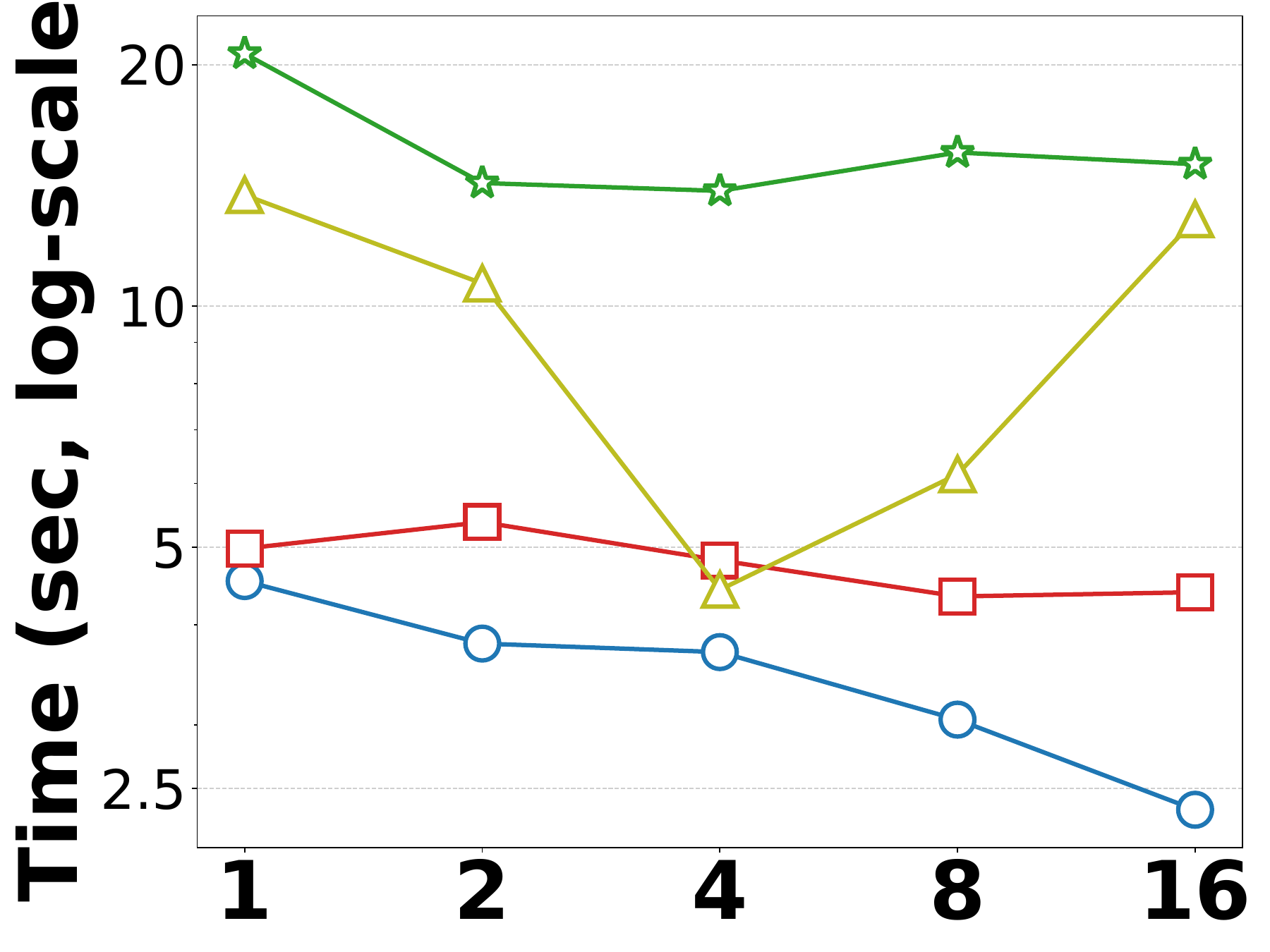}}
        \vspace{-0.5em}
        \caption{Strong scaling results on Twitter dataset with 1--16 machines. \ourgraph is significantly faster and more scalable.}
        \vspace{-0.8em}
        \label{fig:twitter_strong_scale_right}
\end{figure}

\section{Additional Ablations for \ourgraph in Graph Processing}
\label{sec:appendix_graph_ablation}

\subsection{Initialization and Updating Overhead}
\label{sec:appendix_tdogp_init}
In \ourgraph, periodic \ourname re-orchestration is integrated into graph initialization and updating (\S\ref{subsec:overall_design_outline}): the two-stage periodic process constructs the owner-partition data structures (i.e., the source and destination trees) once and reuses these pre-orchestrated flows across many subsequent \DISTEDGEMAP stages.
The overhead of this design therefore depends on the efficiency of the underlying initialization.
We measure the initialization/update cost of \ourgraph's optimized owner-partition CSR construction at $16$ machines and compare it with the data-structure loading/updating times of prior baselines.
As Tab.\ref{tab:tdogp_init} shows, \ourgraph incurs lightweight initialization and updating overhead: on the largest and most important datasets, it is faster than the loading times of Gemini, Graphite, and LA3---e.g., $50.7$s versus $69.2$/$70.1$/$157.3$s on Twitter and $58.1$s versus $131.6$/$99.7$/$164.2$s on Friendster.

\begin{table}[h!]
\centering
\footnotesize
\caption{\ourgraph initialization/updating time (in seconds) at $16$ machines vs.\ baselines (fastest is underlined).}
\label{tab:tdogp_init}
\vspace{-0.5em}
\begin{tabular}{lrrrr}
\toprule
\textbf{Dataset} & \textbf{\ourgraph} & \textbf{Gemini} & \textbf{Graphite} & \textbf{LA3} \\
\midrule
Twitter    & \underline{\textbf{50.68}} & 69.16  & 70.14  & 157.29 \\
Friendster & \underline{\textbf{58.08}} & 131.61 & 99.70  & 164.23 \\
Road-USA   & \underline{\textbf{0.90}}  & 2.08   & 4.60   & 3.43   \\
\bottomrule
\end{tabular}
\vspace{-0.5em}
\end{table}

\subsection{Ablation on PageRank and NUMA Effects}
\label{subsec:numa}
We conduct two additional ablation studies to evaluate the effectiveness of \ourgraph when NUMA-related limitations are mitigated.
First, we perform a multi-machine experiment on the same cluster in \S\ref{subsec:e2e_results}, but restrict execution to a single NUMA node per machine.
The results (Tab.\ref{tab:twitter_pr_numa_comparison}) show that the PR performance gap between \ourgraph and baselines is substantially reduced.
Moreover, \ourgraph has a significantly smaller memory footprint than Graphite.
The remaining minor gap is because ParlayLib is optimized for many cores, making the single-NUMA configuration less favorable.

\begin{table}[h!]
\centering
\footnotesize
\caption{Runtime (in seconds) of PR on Twitter. Each machine uses only $1$ NUMA node. OOM is out-of-memory.}
\label{tab:twitter_pr_numa_comparison}
\vspace{-0.5em}
\begin{tabular}{lcccc}
\toprule
\textbf{Machine Num.} & \textbf{1} & \textbf{4} & \textbf{8} & \textbf{16} \\
\midrule 
Gemini   & 46.67 & 30.72 & 25.61 & 18.59 \\
Graphite & OOM   & OOM   & 22.18 & 14.69 \\
\ourgraph & 40.74 & 32.30 & 21.67 & 15.07 \\
\bottomrule
\end{tabular}
\vspace{-0.7em}
\end{table}

Second, we conduct a single-machine experiment on a more powerful server with four Intel(R) Xeon(R) E7-8867 v4 @2.40GHz processors, providing a total of 144 cores and 180MB of L3 cache.
Unlike our 1024-core cluster, it features an all-to-all interconnection across its four NUMA nodes, eliminating the bottlenecks of square-topology interconnects.
Tab.\ref{tab:twitter_results_on_aware} shows that \ourgraph consistently outperforms all other baselines by a significant margin on Twitter.
Notably, \ourgraph restricted to a single machine even surpasses GBBS~\cite{dhulipala18scalable}, the state-of-the-art single-machine implementation (suitable \textit{only} for a single machine), owing to our optimized lightweight local \EDGEMAP.

\begin{table}[H]
\centering
\footnotesize
\caption{Twitter runtime (in sec) on one Xeon E7-8867v4.}
\label{tab:twitter_results_on_aware}
\vspace{-0.5em}
\begin{tabularx}{0.4\textwidth}{lYYY}
\toprule
\textbf{Method} & \textbf{BFS} & \textbf{BC} & \textbf{PR} \\
\midrule
Gemini   & 0.29 & 1.17 & 4.72 \\
Graphite & 2.17  & 3.91 & 10.05 \\
GBBS     & 0.29 & 1.05  & 6.81  \\
\ourgraph  & \underline{\textbf{0.12}} & \underline{\textbf{0.95}} & \underline{\textbf{4.43}}  \\
\bottomrule
\end{tabularx}
\vspace{-0.2em}
\end{table}

Overall, \ourgraph outperforms prior baselines on 28 of 30 cases on our original cluster and these ablation studies show that \ourgraph is competitive or better even on the remaining 2 cases on clusters with fast NUMA interconnects.

\section{Algorithms for Graph Problems}
\label{sec:appendix_bc_alg}
Alg.\ref{alg:bfs_edgemap} presents pseudocode for breadth-first search (BFS), offering a more efficient implementation via \DISTEDGEMAP than directly using the orchestration interface.

In addition, Alg.\ref{alg:bc} is presented here to describe the pseudocodes for the betweenness centrality (BC) computation from a single root vertex, which is commonly used in performance tests in prior works~\cite{ShunB2013,dhulipala18scalable,zhu2016gemini}.

We also provide the actual C++ implementation of the BC algorithm in Fig.\ref{fig:bc_cplus_code} using the \DISTEDGEMAP interface.
Despite the complexity of the sophisticated, work-efficient BC algorithm from GBBS~\cite{dhulipala18scalable} we follow, our code remains highly concise, requiring fewer than $70$ lines.

\begin{algorithm}[t]
\small
\fontsize{9pt}{12pt}\selectfont
\SetKwFor{IF}{If}{then}{endif}
\SetKwFor{While}{While}{do}{endwhile}
\SetKwFor{ParFor}{parallel for}{do}{endfch}
\SetKwFor{For}{for}{do}{endfch}
\SetKwFor{Function}{Function}{:}{}
\KwIn{$G=(V,E)$, where $|V|=n$; \textsc{StartVertex}}
\KwOut{\textsc{Dist}[$1:n$]: Distrib.~vertex value array}
Initialize Distrib.~vertex value array: \textsc{Dist}[$1:n$] = [-1]$\times$n\\
\textsc{round}$=1$\\
\Function{f ( (u,v): Edge )}{
    \Return \textsc{round} \textsc{IF} (\textsc{Dist}[$u$] == \textsc{round} - 1) \textsc{ELSE} -1
}
\Function{write\_back ( v: Vertex, aggr\_value )}{
    \IF{\textsc{Dist}[$v$] == -1 and aggr\_value != -1}{
        \textsc{Dist}[$v$] = aggr\_value \\
        \Return True
    }
    \Return False
}
\Function{BFS ( $G$, \textsc{StartVertex} )}{
    \DISTVERTEXSUBSET \textsc{frontier} = \{\textsc{StartVertex}\} \\
    \textsc{Dist}[\textsc{StartVertex}] = $0$ \\
    \While{\textsc{frontier} is not empty}{
        \textsc{frontier} = \DISTEDGEMAP($G$, \textsc{frontier}, f, write\_back, MAX) \\
        \textsc{round} += $1$
    }
    \Return \textsc{Dist}[$1:n$]
}
\caption{Breadth-First Search (BFS)}\label{alg:bfs_edgemap}
\end{algorithm}

\begin{algorithm}[b]
\fontsize{9pt}{12pt}\selectfont
\SetKwFor{IF}{If}{then}{endif}
\SetKwFor{While}{While}{do}{endwhile}
\SetKwFor{ParFor}{parallel for}{do}{endfch}
\SetKwFor{For}{for}{do}{endfch}
\SetKwFor{Function}{Function}{:}{}
\KwIn{$G=(V,E)$; \textsc{StartVertex}}
\KwOut{\textsc{BCValues}: distributed vertex value array}
Initialize distributed vertex value array: \textsc{NumPaths}[$1:n$], \textsc{BCValues}[$1:n$], \textsc{rounds}[$1:n$] \\
\textsc{round}$=1$ \\
\Function{f\_forward ( (u,v): Edge )}{
    \Return \textsc{NumPaths}[$u$]
}
\Function{write\_back\_forward ( v: Vertex, aggr\_value )}{
    \IF{\textsc{rounds}[$v$] == 0 or \textsc{rounds}[$v$] == \textsc{round}}{
        \textsc{NumPaths}[$v$].fetch\_add(aggr\_value) \\
        \IF{\textsc{rounds}[$v$] == 0}{
            \textsc{rounds}[$v$] = \textsc{round} \\
            \Return True
        }
    }
    \Return False
}
\Function{f\_backward ( (u,v): Edge )}{
    \Return \textsc{BCValues}[$u$]
}
\Function{write\_back\_backward ( v: Vertex, aggr\_value )}{
    \IF{\textsc{rounds}[$v$] == \textsc{round}}{
        \textsc{BCValues}[$v$].fetch\_add(aggr\_value)
    }
    \Return False
}
\Function{BC ( $G$, \textsc{StartVertex} )}{
    \DISTVERTEXSUBSET \textsc{frontier} = \{\textsc{StartVertex}\} \\
    \textsc{frontiers} = \{\textsc{frontier}\} \\
    \textsc{NumPaths}[\textsc{StartVertex}] = 1 \\
    \textsc{rounds}[\textsc{StartVertex}] = 1 \\
    \tcp{Forward Pass}
    \While{\textsc{frontier} is not empty}{
        \textsc{round} += $1$ \\
        \textsc{frontier} = \DISTEDGEMAP($G$, \textsc{frontier}, f\_forward, wb\_forward, add) \\
        \textsc{frontiers}.append(\textsc{frontier})
    }
    \tcp{Backward Pass}
    \ParFor{$i\in[1:n]$}{
        \textsc{BCValues}[$i$] = \textsc{NumPaths}[$i$] = $1/$\textsc{NumPaths}[$i$]
    }
    \While{\textsc{round} $>0$}{
        \textsc{frontier} = \textsc{frontiers}[\textsc{round}] \\
        \DISTEDGEMAP($G$, \textsc{frontier}, f\_backward, wb\_backward, add) \\
        \textsc{round} -= $1$ \\
    }
    \ParFor{$i\in[1:n]$}{
        \textsc{BCValues}[$i$] = \textsc{BCValues}[$i$] / \textsc{NumPaths}[$i$] - 1
    }
    \Return \textsc{BCValues}[$1:n$]
}
\caption{Betweenness Centrality (BC)}
\label{alg:bc}
\end{algorithm}

\clearpage

\begin{figure*}[tp]
    \centering
    \includegraphics[width=0.97\textwidth]{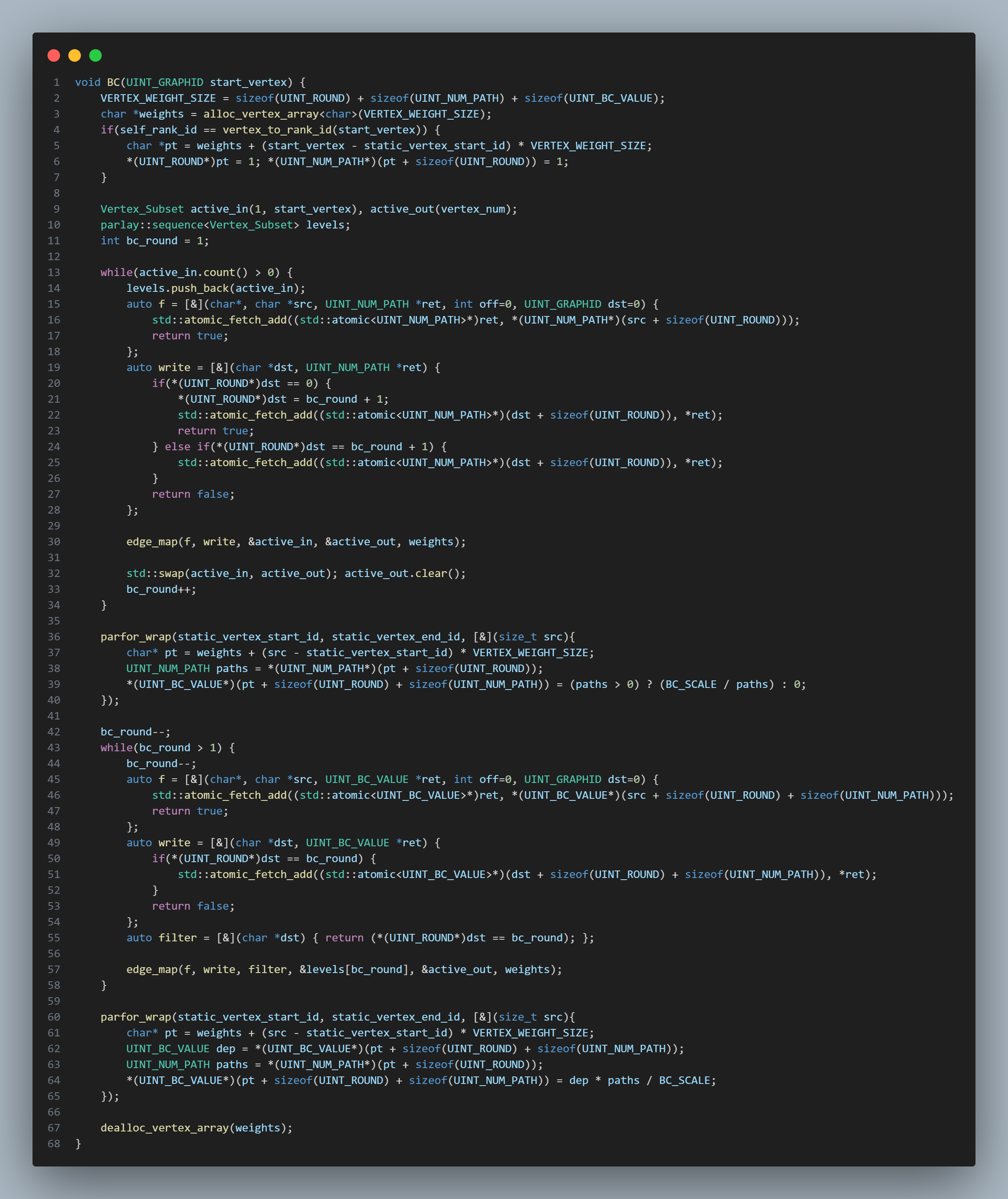}
    \caption{The actual C++ implementation of betweenness centrality (BC) using our \DISTEDGEMAP interface is highly concise---fewer than 70 lines of code.}
    \label{fig:bc_cplus_code}
\end{figure*}

\clearpage

\end{document}